\providecommand{\keywords}[1]
{
  \small	
  \textbf{\textit{Keywords---}} #1
}
\DeclareMathOperator{\rank}{\mathbf{Rank}}
\DeclareMathOperator{\reals}{\mathbb{R}}
\def\norm#1{\left\lVert #1\right\rVert}
\let\oldnl\nl
\newcommand{\nonl}{\renewcommand{\nl}{\let\nl\oldnl}}
\newtheorem{theorem}{Theorem}
\newtheorem{proposition}{Proposition}
\newtheorem{definition}{Definition}
\newtheorem{remark}{Remark}
\DeclareMathOperator*{\argmin}{argmin}
\DeclarePairedDelimiter\ceil{\lceil}{\rceil}
\begin{document}

\title{\LARGE \bf Lp Quasi-norm Minimization: Algorithm and Applications}
\author{Omar M.Sleem$^\star$ \quad M.E. Ashour$^{\ddag}$  \\ N.S. Aybat$^\dagger \S$ \quad  Constantino M. Lagoa$^{\star}$  \\ 
\small $^\star$Dep. of Electrical Engineering, Pennsylvania State University, State College, PA 16801, USA. \\	
\small $^\ddag$Wireless R$\&$D Department, Qualcomm Technologies, Inc, San Diego, CA 92121, USA.\\
\small $^\dagger$Department of Industrial and Manufacturing Engineering, Pennsylvania State University, State College, PA 16801, USA.\\
\small Email: \{\ oms46@psu.edu, mashour@qti.qualcomm.com, nsa10@psu.edu, cml18@psu.edu \}\
}
\date{}
\maketitle
\begin{abstract}
Sparsity finds applications in areas as diverse as statistics, machine learning, and signal processing. Computations over sparse structures are less complex compared to their dense counterparts, and their storage consumes less space. This paper proposes a heuristic method for retrieving sparse approximate solutions of optimization problems via minimizing the $\ell_{p}$ quasi-norm, where $0<p<1$. An iterative two-block ADMM algorithm for minimizing the $\ell_{p}$ quasi-norm subject to convex constraints is proposed. For $p=s/q<1$, $s,q \in \mathbb{Z}_{+}$, the proposed algorithm requires solving for the roots of a scalar degree $2q$ polynomial as opposed to applying a soft thresholding operator in the case of $\ell_{1}$.
The merit of that algorithm relies on its ability to solve the $\ell_{p}$ quasi-norm minimization subject to any convex set of constraints. However, it suffers from low speed, due to a convex projection step in each iteration, and the lack of mathematical convergence guarantee. We then aim to vanquish these shortcomings by relaxing the assumption on the constraints set to be the set formed due to convex and differentiable, with Lipschitz continuous gradient, functions, i.e. specifically, polytope sets. Using a proximal gradient step, we mitigate the convex projection step and hence enhance the algorithm speed while proving its convergence.
We then present various applications where the proposed algorithm excels, namely, matrix rank minimization, sparse signal reconstruction from noisy measurements, sparse binary classification, and system identification. The results demonstrate the significant gains obtained by the proposed algorithm compared to those via $\ell_{1}$ minimization.
\end{abstract}
\keywords{Sparsity, compressed sensing, rank minimization, ADMM, Proximal gradient method. }
\vspace{-0.5cm}
\section{Introduction}
\subsection{Motivation}
In numerical analysis and scientific computing, a sparse matrix/array is the one with 
many of its elements being zeros. The number of zeros divided by the total number of elements is called sparsity. Sparse data is often easier to store and process. Hence, techniques for deriving sparse solutions and exploiting them have attracted the attention of many researchers in various engineering fields like machine learning, signal processing, and control theory.

The taxonomy of sparsity can be studied through the rank minimization problem (RMP). It has been lately considered in many engineering applications including control design and system identification. This is because the notions of complexity and system order can be closely related to the matrix rank. The RMP can be formulated as follows:
\begin{equation} \label{1}
\begin{aligned}
\min_{\mathbf{X}} \quad & \rank(\mathbf{X}), \quad
\textrm{s.t.}  & \mathbf{X} \in \mathcal{M},\\
\end{aligned}
\end{equation}
where $\mathbf{X} \in \mathbb{R}^{m\times n}$ and $\mathcal{M}\subset\reals^{m\times n}$ is a convex set. The problem \eqref{1} in its generality is NP-hard \cite{2}. Therefore, 
polynomial time algorithms for solving large-scale problems of the form in \eqref{1} are not currently known. 
Hence, currently adopted methods for solving such problems are approximate and structured heuristics. 

A special case of RMP is the sparse vector recovery (SVR) problem involving $\ell_{0}$ pseudo-norm minimization given by: 
\begin{equation} \label{zero norm}
\begin{aligned}
\min_{\mathbf{x}} \quad &  \left\lVert \mathbf{x}\right\rVert_{0}, \quad
\textrm{s.t.} & \mathbf{x} \in \mathcal{V},\\
\end{aligned}
\end{equation}
where $\mathbf{x} \in \mathbb{R}^{n}$, $\mathcal{V}\subset\reals^n$ is a closed convex set and $\norm{\cdot}_{0}$ counts the number of the non-zero 
elements of its argument. From the definition of the rank being the number of non-zero singular values of a matrix, it can be easily realized that \eqref{1} is a generalized form of \eqref{zero norm}. 

Various works -- which will be discussed in the next section in more detail -- have explored efficient solution techniques for the problems in \eqref{1} and \eqref{zero norm} independently using Schatten-p and $\ell_{p}$ quasi-norm relaxations respectively. However, all of these methods either assume a specific structure for the convex set $\mathcal{M}$ in \eqref{1} or only work for the special case in \eqref{zero norm}; hence, they lack generality. Using the fact that the Schatten-p quasi-norm is the $\ell_{p}$ quasi-norm of the matrix singular values, we aim to 
design an efficient heuristic method based on Schatten-p relaxation for solving
both problems in a unified manner.
To achieve this, we first 
propose an algorithm for solving $\ell_{p}$ quasi-norm relaxation of the SVR problem in \eqref{zero norm}, and next,
exploiting the fact that \eqref{zero norm} is a special case of \eqref{1}, we then 
employ the derived $\ell_{p}$ quasi-norm minimization algorithm 
as a building block for the desired generalized algorithm for the rank minimization problem.
\vspace{-0.3cm}
\subsection{Related work}
\subsubsection{Sparse Vector Recovery} \label{SVR}
As discussed, a sparse solution is defined as the one 
having the minimum number of non-zero components while satisfying certain constraints such as a system of linear equations ~\cite{vec1}. Since many signals are either sparse or compressible, SVR problem has found applications in object recognition, classification and compressed sensing problems, see, e.g.,\cite{vec2,vec3,vec4}. In \cite{vec5}, the authors discussed the concept of the sparse representation of signals and systems, where they reviewed the theoretical and empirical results 
on sparse optimization, and discussed sufficient conditions needed for 
uniqueness, stability and computational practicability. 
Different applications for the SVR problem are explored in \cite{vec5} and 
it is argued that in certain denoising and compression tasks, 
the methods for sparse optimization provide state of the art solutions. 

The problem of constructing a sparse solution to undetermined linear systems has received great attention. In \cite{vec6}, the authors surveyed the 
existing algorithms for sparse approximation, namely; 
greedy methods~\cite{vec7,vec13}, the methods based on convex relaxation \cite{vec3,vec4,vec8,vec14}, those involving non-convex optimization \cite{vec9,vec15}, Bayesian framework~\cite{vec10,vec11}, and requiring brute force~\cite{vec12}. They also discussed the computational requirements of each algorithm and their relation 
to each other. 

Sparse optimization problems of the form $\min\{f(\mathbf{x}) + \mu g(\mathbf{x}) \}$ have been extensively studied in the literature, where $g(\mathbf{x})$ is a sparsity inducing function, e.g., $\ell_1$-norm, $f$ is a loss function on measurement errors, e.g., $f(\mathbf{x})=\norm{A \mathbf{x}-b}^2$, and $\mu>0$ is a trade-off parameter between data-fidelity and sparsity. In \cite{vec1}, the authors considered 
sparse recovery problem from a set of corrupted measurements.
For $g(\cdot)=\norm{\cdot}_1$, they established a sufficient condition for
the \emph{exact} sparse signal recovery, 
i.e., restricted isometry property (RIP). 
Motivated by the fact that $\|\mathbf{x}\|_{p}^{p} \to \|\mathbf{x}\|_{0}$ as $p \to 0$, it is natural to consider the above problem with $g$ set to $\ell_{p}$ quasi-norm for $p\in(0,1)$. Hence, the authors in \cite{vec16} presented theoretical results demonstrating the ability of the $\ell_{p}$ quasi-norm to recover sparse signals from noisy measurements. Under more relaxed RIP conditions, they showed that the $\ell_{p}$ quasi-norm provides better theoretical guarantees in terms of stability and robustness than the $\ell_{1}$ minimization. In \cite{vec9}, the problem of SVR via the $\ell_{p}$ quasi-norm minimization 
from small number of linear measurements of the target signal was considered. This setting is important in applications where data acquisition is difficult or expensive. However, 
the proposed approach in~\cite{vec9} 
has limited applicability due to its long reconstruction time compared to the $\ell_{1}$ norm. In \cite{vec17}, the authors exploited Fourier-based algorithms for convex optimization to solve 
sparse signals reconstruction problem via the $\ell_{p}$ quasi-norm minimization. They showed that their approach combines the construction abilities of the non-convex methods with the speed of the convex ones. 
In \cite{vec18} the authors proposed an approach, for sparse reconstruction, replacing the non-convex function with a quadratic convex one. In \cite{ref1}, an alternating direction method of multipliers (ADMM) based algorithm that enforces both sparsity and group sparsity using non-convex regularization is presented. An iterative half thresholding algorithm for fast solution of the $\ell_{0.5}$ regularization is proposed in \cite{ref6}. The authors proved the existence of the resolvent of gradient of $||\mathbf{x}||_{0.5}^{0.5}$ , calculated its analytic expression, and 
derived a thresholding representation of solutions for $\ell_{0.5}$ regularization. In \cite{ref7}, the convergence of the iterative half thresholding algorithm is studied, where it was shown that, under certain conditions, the half thresholding algorithm converges, to a local minimizer of the regularized problem, with a linear convergence rate.
Conditions for the convergence of an ADMM algorithm that solves the problem of minimizing the sum of a smooth function with a bounded Hessian and a non-smooth function are derived in \cite{ref4}. In \cite{ref3}, the convergence of ADMM for minimizing a non-convex and possibly non-smooth objective function subject to equality constraints is analyzed. The developed convergence guarantee covers a variety of non-convex objectives including piece-wise linear functions, $\ell_{p}$ quasi-norm and Schatten-$p$ quasi-norm ($0<p<1$), while allowing non-convex constraints as well. 
\subsubsection{Rank minimization} \label{RM}
In \cite{3}, the authors aimed 
at determining the least order dynamic output feedback, using same formulation as in (\ref{1}), which stabilizes a given linear time invariant 
system. They found that minimizing the trace instead of the rank results in a Semi-Definite Program (SDP) that can be solved efficiently. However, their solution was only 
applicable for symmetric and square matrices. In \cite{4}, a generalization 
of the latter approach was introduced which is based on replacing the rank in the objective function with the summation of the singular values of the matrix, i.e., the nuclear norm. They showed that this leads to the convex envelope of the non-convex rank objective and boils down to the original trace heuristic when the decision matrix is a symmetric positive semi-definite~(PSD) matrix. Finally, effectiveness of the approach was shown using a frequency domain system identification problem. In \cite{5}, another heuristic based on the logarithm of the determinant was presented as a surrogate for the rank minimization 
over the subspace of PSD matrices, and the authors showed that this formulation can be solved using 
a sequence of trace minimization problems. The authors also extended their heuristic 
to handle matrices that are not necessarily PSD. 
In \cite{1}, the authors 
also studied the existing trace and log determinant heuristics 
for approximating (\ref{1}). They 
discussed the applications of these heuristics for computing a low-rank approximation of 
1) \emph{covariance matrices} for a given dataset so that one can obtain a simple data model, easy to interpret, which is especially important in statistics and signal processing;
2) \emph{Hankel matrices} 
arising in system identification of a time invariant, low-order system for given output realizations; and 3) matrices appearing in various other problems including $H_{\infty}$ and reduced-order $\mu$-synthesis with constant scaling and problems with inertia constraints.

Although the nuclear norm is 
the tightest convex substitute for the non-convex rank function, 
one of its major shortcomings is that it treats all the singular values equally 
in order to be able to preserve the convexity. Therefore, this restricts its performance in applications where the singular values need to be treated differently, e.g., particularly in image denoising. In \cite{6}, the authors proposed an iterative re-weighted nuclear norm heuristic to 
avoid this problem and analyzed its convergence. They also 
proposed a gradient-based algorithm 
and applied it to 
a low-order system identification problem. Experimental results showed that the re-weighted nuclear norm 
leads to a lower order model than the nuclear norm itself. In \cite{7}, the solution of weighted nuclear norm (WNN) problem 
was analyzed under different circumstances where the weights could be in a non-ascending, arbitrary or a non-descending order. The authors 
applied their proposed WNN algorithm to an image denoising 
problem by exploiting the image non-local self-similarity. Numerical results showed that the proposed WNN algorithm outperforms many of the state of the art denoising methods in terms of both quantitative measures and visual quality. 

Another method, inspired by the success of the $\ell_{p}$ quasi-norm $(0<p<1)$ for sparse signal reconstruction, is to enforce 
low-rank 
structure by using the Schatten-p quasi-norm, which is defined as the $\ell_{p}$ quasi-norm of the singular values. In \cite{8}, the authors considered the matrix completion problem, which deals with 
constructing a low-rank matrix, given a subset of its entries. Instead of 
minimizing the nuclear norm, 
the authors proposed a Schatten-p quasi-norm formulation, for which 
they came up with an algorithm and
studied its convergence properties. In each iteration, the sub-problem that needs to be solved 
has a closed-form solution,
which makes it fast and suitable for large-scale problems. To improve the robustness of the solutions to matrix completion problem, in~\cite{10} Schatten-$p$ quasi-norm for low-rank recovery was combined with $\ell_{p}$ quasi-norm $(0 < p \leq 1)$ of the prediction errors on the observed entries.  
The authors proposed an algorithm based on the ADMM, 
which performed better in their numerical experiments than other completion methods like fixed-point continuation and accelerated proximal gradient singular-value thresholding. In \cite{9}, the authors extended the theoretical recovery results previously developped for the nuclear norm 
to Schatten-$p$ quasi-norm using a weaker version of the RIP assumption; they showed that the minimum rank solution can be recovered 
by solving the Schatten-$p$ quasi-norm minimization problem. In \cite{ref8}, the authors developed an iterative re-weighted least squares algorithm to solve an unconstrained $\ell_{p}$ minimization problem. The algorithm, and analysis, are extended to include the low rank recovery problem. 
Another non-convex approaches for matrix optimization problems involving sparsity are developed, by means of a generalized shrinkage operation, in \cite{ref2}. These approaches are applied to the decomposition of video into low rank
and sparse components, which is able to separate moving objects
from the stationary background better than in the convex case.
\vspace{-0.3cm}
\subsection{Contributions} \label{contributions}
Despite the good performance of the various algorithms discussed in \ref{SVR} and \ref{RM} for solving different relaxations of \eqref{zero norm} and \eqref{1}, they are all based on a specific structure of the considered convex constraint set; 
therefore, they are problem specific and lack generality. 
In this work, we present a general-purpose method based on projections onto the constraint set, for which we assume only closed convexity as the specific structure.  \cite{Proj1} and \cite{Proj2} examine the characteristics of the projection on the constraint set. In the latter, the projection technique of every given point on $\ell_{p}$ balls is assumed to be known a priori, whereas in the former, the issue is solved while omitting an important coupling condition for the polynomial equations.

First, 
based on our 
previous work in \cite{14}, we propose an ADMM algorithm (pQN-ADMM) to solve the $\ell_{p}$ quasi-norm relaxation of (\ref{zero norm}). At each iteration, the bottleneck operation is 
to compute 
Euclidean projections on to some particular convex and non-convex sets. 
The proposed algorithm possesses two important properties: 1) Its computational complexity is similar to $\ell_{1}$ minimization algorithms except for the additional effort of solving for the roots of a 
polynomial; 2) No specific structure for the convex set is required. Numerical results, using a SVR and binary classification examples, are presented to show the 
competitive performance of our algorithm 
against the $\ell_{1}$ minimization approach. We then extend 
the proposed algorithm to solve the relaxation of (\ref{1}) 
based on the Schatten-$p$ quasi-norm -- here, we exploit the fact that minimizing the $\ell_{p}$-norm of the vector of singular values is equivalent to minimizing the Schatten-$p$ quasi-norm. 
We consider two different numerical examples: 
1) We formulate time domain system identification problem for minimum order system detection and solve it using the pQN-ADMM approach 
and compare our recovery results against the nuclear norm minimization 
approach of \cite{4}; 2) We consider a matrix completion problem, where the goal is to recover an unknown low-rank matrix based on a small fraction of observed entries. 
Numerical results in both examples show 
that our method is competitive against some of the state-of-the-art algorithms in terms of both the detected system order (for the system identification problem) and the rank of the matrix recovered (for the matrix completion problem). 

Finally, since the derived algorithm depends on a computationally expensive convex projection step in every iteration, we aim to develop a faster algorithm with a mathematical convergence guarantee. Considering only a subset of problems where the constraint set is a polytope, we utilize concepts from the proximal gradient (PG) method to derive a fast algorithm and prove that it converges with a rate $O(\frac{1}{K})$, where $K$ is the iteration budget given to the algorithm.

\section{Notations and basic definitions}
Unless otherwise specified, we denote vectors with lowercase boldface letters, i.e., $\mathbf{x}$, with $i$-th entry as $x_{i}$, while matrices are in uppercase, i.e. $\mathbf{X}$, with $(i,j)$-th entry as $x_{i,j}$. For an integer $n \in \mathbb{Z}_{+}$, $[n]\stackrel{\Delta}{=}\{1,\hdots,n\}$. $\mathbf{1}$ represents a vector of all entries equal to 1, while $\mathbbm{1}_{\mathcal{G}}(.)$ is an indicator function to the set $\mathcal{G}$, i.e., it evaluates to zero if its argument belongs to the set 
$\mathcal{G}$ and is $+\infty$ otherwise. 

For a vector $\mathbf{x}\in\mathbb{R}^{n}$, the general $\ell_{p}$ norm is defined as 
\begin{equation} \label{normdef}
    \left\lVert \mathbf{x}\right\rVert_{p}\stackrel{\Delta}{=}(\sum_{i\in [n]}|x_{i}|^{p})^{\frac{1}{p}}.
\end{equation}
For convenience, we let $\left\lVert \mathbf{x} \right \rVert$ be the well known euclidean norm, i.e., $p=2$. When $0<p<1$, the expression in \eqref{normdef} is termed as the quasi-norm satisfying the same axioms of the norm except the triangular inequality making it a non-convex function. 
\begin{definition}
Let $\mathbf{X}:\mathbf{V}\longrightarrow\mathbf{W}$ be a linear operator between two normed spaces equipped with $\ell_{p}$ norm, $p\in [1,\infty)$. The induced $p$-norm is defined as,
\begin{equation} \label{inducedpnorm}
    \left\lVert\mathbf{X}\right\rVert_{p}\stackrel{\Delta}{=}\sup_{v\neq 0}\big\{\frac{\left\lVert\mathbf{X}v\right\rVert_{p}}{\left\lVert v\right\rVert_{p}} \big\}.
\end{equation}
\end{definition}
A special case of \eqref{inducedpnorm} is when $p=2$, known as the spectral radius, which can be shown to be the square root of the maximum eigen value of $\mathbf{X}^{\mathrm{H}}\mathbf{X}$, where $\mathbf{X}^{\mathrm{H}}$ is the complex conjugate of the transpose of $\mathbf{X}$, i.e., $\mathbf{X}^{\top}$. In the rest of the analysis, we will drop the subscript 2 in the spectral norm notation and only refer to it with $\left\lVert.\right\rVert$.

For a matrix $\mathbf{X}\in\mathbb{R}^{m\times n}$, the $L_{p,q}$ entry-wise norm is defined as,
\begin{equation} \label{entrywisenorm}
  \left\lVert\mathbf{X}\right\rVert_{p,q}\stackrel{\Delta}{=}(\sum_{j\in [n]}(\sum_{i \in [m]}|x_{ij}|^{p})^{\frac{q}{p}})^{\frac{1}{q}}.
\end{equation}
A special case of \eqref{entrywisenorm} is when $p=q=2$, known as the Frobenius norm, which were refer to by $\left\lVert.\right\rVert_{\mathrm{f}}$.
\begin{definition}
Let $\mathbf{H}_{1} \subset \mathbb{R}^{n}$ and $\mathbf{H}_{2}\subset \mathbb{R}^{m}$ be two separable Hilbert spaces and $\mathbf{X}\in \mathbb{R}^{m\times n}$ be a linear compact operator from $\mathbf{H}_{1}$ to $\mathbf{H}_{2}$, the Schatten-p norm of $\mathbf{X}$ is then defined as,
\begin{equation} \label{schattenpnormdef}
    \left\lVert\mathbf{X}\right\rVert_{p}\stackrel{\Delta}{=}(\!\!\!\sum_{i\in [\min\{m,n\}]}\!\!\!\sigma_{i}(\mathbf{X})^{p})^{\frac{1}{p}},
\end{equation}
where $\sigma_{i}(\mathbf{X})$ is the $i$-th singular value of the matrix $\mathbf{X}$. 
\end{definition}

When $p=1$, equation \eqref{schattenpnormdef} yields to the nuclear norm which is the convex envelope of the rank function. Throughout the paper, we consider a non-convex relaxation for the rank function, specifically $p=1/2$, and compare its performance with the nuclear norm case in the results section.

We define $\ceil*{\cdot}$ as the ceiling operator, $\mathrm{vec}(\mathbf{X}) \in \mathbb{R}^{mn}$ as a vector formed by stacking the columns of the matrix $\mathbf{X}\in\mathbb{R}^{m\times n}$ and $Hankel(.)$ as an operator that outputs a Hankel matrix constructed from the applied vector arguments.
\vspace{-0.3cm}
\section{ Sparse Vector Recovery Algorithm} \label{quasi-norm}
\subsection{Problem Formulation}
This section develops a method for approximating the solution of \eqref{zero norm} using the following relaxation,
\begin{equation} \label{motivProb}
\begin{aligned}
\min_{\mathbf{x}} \quad &  \left\lVert \mathbf{x}\right\rVert_{p}^{p}, \quad
\textrm{s.t.} & \mathbf{x} \in \mathcal{V},\\
\end{aligned}
\end{equation}
where $p \in (0, 1]$ and $\mathcal{V}$ is a closed convex set. Problem \eqref{motivProb} is convex 
for $p=1$; hence, can be solved to optimality efficiently. 
However, the problem becomes non-convex when $p<1$. An epigraph equivalent formulation of (\ref{motivProb}) is obtained by introducing the variable $\mathbf{t}=[t_{i}]_{i \in [n]}$:
\begin{equation}\label{lp_epigraph} 
\begin{aligned} 
& \underset{\mathbf{x},\mathbf{t}}{\min}
& & \mathbf{1}^{\top} \mathbf{t},\\
& \mathrm{s.t.}
& & t_{i} \geq |x_{i}|^{p}, \quad i \in [n], \quad \mathbf{x} \in \mathcal{V}.
\end{aligned}
\end{equation}
Let $\mathcal{X} \subset \mathbb{R}^{2}$ denote the epigraph of the scalar function $|x|^{p}$, i.e., $\mathcal{X}=\{(x,t) \in \mathbb{R}^2: t \geq |x|^{p}\}$, which is a non-convex set for $p<1$. Then, (\ref{lp_epigraph}) can be cast as
\begin{equation}\label{lp_indicators}  
\begin{aligned} 
& \underset{\mathbf{x},\mathbf{t}}{\min}
& & \sum_{i \in [n]} \mathbbm{1}_{\mathcal{X}}(x_{i},t_{i}) + \mathbf{1}^{\top}\mathbf{t},
& \mathrm{s.t.}
& &  \mathbf{x} \in \mathcal{V}.
\end{aligned}
\end{equation}
ADMM exploits the structure of the problem to split the optimization over the variables via iteratively solving fairly simple subproblems. In particular, 
we introduce auxiliary variables $\mathbf{y}=[y_{i}]_{i \in [n]}$ and $\mathbf{z}=[z_{i}]_{i \in [n]}$ and obtain an ADMM equivalent formulation of (\ref{lp_indicators}) given by:
\begin{equation}\label{lp_ADMM} 
\begin{aligned} 
& \underset{\mathbf{x},\mathbf{t},\mathbf{y},\mathbf{z}}{\min}
& & \sum_{i \in [n]} \mathbbm{1}_{\mathcal{X}}(x_{i},t_{i}) + \mathbbm{1}_{\mathcal{Y}}(\mathbf{y})+\mathbf{1}^{\top}\mathbf{z}, \\
& \mathrm{s.t.}
& & \mathbf{x}=\mathbf{y}: \boldsymbol\lambda, \quad
 \mathbf{t}=\mathbf{z}: \boldsymbol\theta,
\end{aligned}
\end{equation}
where $\mathcal{Y}$ is the $0$-sublevel set of $f$, i.e., $\mathcal{Y}=\{\mathbf{y} \in \mathbb{R}^{n} : \textbf{y} \in \mathcal{V}\}$. 
The dual variables associated with the constraints $\mathbf{x}=\mathbf{y}$ and $\mathbf{t}=\mathbf{z}$ are $\boldsymbol\lambda$ and 
$\boldsymbol\theta$, respectively. Hence, the Lagrangian function corresponding to (\ref{lp_ADMM}) augmented with a quadratic penalty on the violation of the equality constraints with penalty parameter $\rho > 0$, is given by:
\begin{align}\label{Lagrangian}
&\mathcal{L}_{\rho}(\mathbf{x},\mathbf{t},\mathbf{y},\mathbf{z},\boldsymbol\lambda,\boldsymbol\theta)=
\sum_{i \in [n]} \!\mathbbm{1}_{\mathcal{X}}(x_{i},t_{i}) + \mathbbm{1}_{\mathcal{Y}}(\mathbf{y})+\mathbf{1}^{\top}\mathbf{z} \notag \\
& + \boldsymbol\lambda^{\top}\!(\mathbf{x}-\mathbf{y}) 
+ \boldsymbol\theta^{\top}\!(\mathbf{t}-\mathbf{z}) + \frac{\rho}{2}\! \left(\|\mathbf{x}-\mathbf{y}\|^{2} +\|\mathbf{t}-\mathbf{z}\|^{2} \right)\!. 
\end{align}
Considering the two block variables $(\mathbf{x},\mathbf{t})$ and $(\mathbf{y},\mathbf{z})$, ADMM \cite{11} consists of the following iterations:
\begin{eqnarray}
(\mathbf{x},\mathbf{t})^{k+1} &\! = &\! \underset{\mathbf{x},\mathbf{t}}{\mathrm{argmin}}~
\mathcal{L}_{\rho}(\mathbf{x},\mathbf{t},\mathbf{y}^{k},\mathbf{z}^{k},\boldsymbol\lambda^{k},\boldsymbol\theta^{k}) \label{ADMM_xt}\\
(\mathbf{y},\mathbf{z})^{k+1} &\! = &\! \underset{\mathbf{y},\mathbf{z}}{\mathrm{argmin}}~
 \mathcal{L}_{\rho}(\mathbf{x}^{k+1},\mathbf{t}^{k+1},\mathbf{y},\mathbf{z},\boldsymbol\lambda^{k},\boldsymbol\theta^{k}) \label{ADMM_yz}\\
 \boldsymbol\lambda^{k+1} &\! = &\! \boldsymbol\lambda^{k}+\rho(\mathbf{x}^{k+1}-\mathbf{y}^{k+1})  \label{ADMM_lambda} \\
 \boldsymbol\theta^{k+1} &\! = &\! \boldsymbol\theta^{k}+\rho(\mathbf{t}^{k+1}-\mathbf{z}^{k+1}).  \label{ADMM_theta}
\end{eqnarray}
According to the expression of the augmented Lagrangian function in (\ref{Lagrangian}), it follows from (\ref{ADMM_xt}) that the variables $\mathbf{x}$ and 
$\mathbf{t}$ are updated via solving the following non-convex problem
\begin{align}\label{ADMM_xt_upd}
\begin{aligned}
& \underset{\mathbf{x},\mathbf{t}}{\min} & &
 \|\mathbf{x}-\mathbf{y}^{k}+ \frac{\boldsymbol\lambda^{k}}{\rho}\|^{2} 
+\|\mathbf{t}-\mathbf{z}^{k}+\frac{\boldsymbol\theta^{k}}{\rho}\|^{2} \\
& \mathrm{s.t.}
& & (x_{i},t_{i}) \in \mathcal{X}, \quad i \in [n].
\end{aligned}
\end{align}
Exploiting the separable structure of (\ref{ADMM_xt_upd}), one immediately concludes that \eqref{ADMM_xt_upd} can be split into $n$ independent 2-dimensional problems that can be solved in parallel, i.e., for each $i \in [n]$,
\begin{equation}\label{xt_noncvx_proj}
(x_{i},t_{i})^{k+1} = \Pi_{\mathcal{X}}\left(y_{i}^{k}-\frac{\lambda_{i}^{k}}{\rho},z_{i}^{k}-\frac{\theta_{i}^{k}}{\rho}\right),
\end{equation}
where $\Pi_{\mathcal{X}}(.)$ denotes the Euclidean projection operator onto the set $\mathcal{X}$. 
Furthermore, (\ref{Lagrangian}) and (\ref{ADMM_yz}) imply that $\mathbf{y}$ and $\mathbf{z}$ are independently updated as follows:
\begin{eqnarray}
\mathbf{y}^{k+1} &=& \Pi_{\mathcal{Y}}\left( \mathbf{x}^{k+1} + \frac{\boldsymbol \lambda^{k}}{\rho} \right)  \label{y:upd}\\
\mathbf{z}^{k+1}  & = & \mathbf{t}^{k+1} +\frac{\boldsymbol\theta^{k}- \mathbf{1}}{\rho}.
\end{eqnarray}

\begin{algorithm}[t]
\caption{ADMM ($\rho > 0$)}
\label{alg:1}
{\small
\renewcommand{\thealgorithm}{}
\floatname{algorithm}{}
\begin{algorithmic}[1]
  \State Initialize: $\mathbf{y}^{0}$, $\mathbf{z}^{0}$, $\boldsymbol\lambda^{0}$, $\boldsymbol\theta^{0}$\;
   \For{$k \geq 0$}
        \State $(x_{i},t_{i})^{k+1} \gets \Pi_{\mathcal{X}}\left(y_{i}^{k}\!-\!\frac{\lambda_{i}^{k}}{\rho},z_{i}^{k}\!-\!\frac{\theta_{i}^{k}}{\rho}\right), \forall i \in [n]$\;
        \State $\mathbf{y}^{k+1} \gets \Pi_{\mathcal{Y}}\left( \mathbf{x}^{k+1} + \frac{\boldsymbol \lambda^{k}}{\rho} \right)$\; \label{step4_alg1}
        \State $\mathbf{z}^{k+1}  \gets  \mathbf{t}^{k+1} +\frac{\boldsymbol\theta^{k}- \mathbf{1}}{\rho}$\;
        \State $\boldsymbol\lambda^{k+1} \gets \boldsymbol\lambda^{k} + \rho(\mathbf{x}^{k+1} - \mathbf{y}^{k+1})$ \;
        \State $\boldsymbol\theta^{k+1} \gets \boldsymbol\theta^{k} + \rho(\mathbf{t}^{k+1} - \mathbf{z}^{k+1})$.
    \EndFor
\end{algorithmic}}%
\end{algorithm}
Algorithm \ref{alg:1} summarizes the proposed ADMM algorithm. It is clear that $\mathbf{z}$, $\boldsymbol\lambda$, and $\boldsymbol\theta$ merit closed-form updates. However, updating $(\mathbf{x},\mathbf{t})$ requires solving $n$ non-convex problems. 
Our strategy for dealing with this issue is presented in the section that follows.

\vspace{-0.3cm}
\subsection{Non-convex Projection} \label{non_cnvx_proj}
In this part, we present the method used to tackle the non-convex projection problem required to update $\mathbf{x}$ and $\mathbf{t}$.

Among the advantages of the proposed algorithm is that it is amenable to decentralization. As it is clear from (\ref{xt_noncvx_proj}), $\mathbf{x}$ and $\mathbf{t}$ can be updated element-wise via performing a projection operation onto the non-convex set $\mathcal{X}$, one for each $i \in [n]$. The $n$ projection problems can be run independently in parallel. We now outline the proposed idea for solving one such projection, i.e., we suppress the dependence on the index of the entry of $\mathbf{x}$ and $\mathbf{t}$. For $(\bar{x},\bar{t}) \in \mathbb{R}^{2}$, $\Pi_{\mathcal{X}}(\bar{x},\bar{t})$ entails solving
\begin{equation}\label{Pi_chi} 
\begin{aligned} 
& \underset{x,t}{\min}
& & g(x,t) \triangleq (t-\bar{t})^{2}+(x-\bar{x})^{2},
& \mathrm{s.t.}
& & t \geq |x|^{p}.
\end{aligned}
\end{equation}
If $\bar{t} \geq |\bar{x}|^{p}$, then trivially $\Pi_{\mathcal{X}}(\bar{x},\bar{t})=(\bar{x},\bar{t})$. Thus, we focus on the case in which 
$\bar{t} < |\bar{x}|^{p}$. The following proposition states the necessary optimality conditions for (\ref{Pi_chi}).
\begin{proposition}\label{Proposition1}
Let $\bar{t} < |\bar{x}|^{p}$, and $(x^{*},t^{*})$ be an optimal solution of (\ref{Pi_chi}). Then, the following properties are satisfied
\begin{enumerate}[(a)]
\begin{multicols}{2}
\item $\mathrm{sign}(x^{*})=\mathrm{sign}(\bar{x})$,
\item $t^{*} \geq \bar{t}$,
\item $|x^{*}|^{p} \geq \bar{t}$,
\item $t^{*}=|x^{*}|^{p}$.
\end{multicols}
\end{enumerate}
\end{proposition}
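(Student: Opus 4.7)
The plan is to prove each of the four properties by an exchange argument: assume the property fails for a putative optimizer $(x^*, t^*)$ and exhibit a feasible point with strictly smaller objective, contradicting optimality. Observe first that the target point $(\bar{x}, \bar{t})$ is itself infeasible, since $\bar{t} < |\bar{x}|^p$, so every feasible point has strictly positive objective; this strict positivity will be needed when invoking scaling arguments.

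For (a), if $x^* \bar{x} < 0$, I replace $x^*$ by $-x^*$. The new point $(-x^*, t^*)$ remains feasible because $|-x^*|^p = |x^*|^p \leq t^*$, and a direct expansion gives $(-x^* - \bar{x})^2 - (x^* - \bar{x})^2 = 4 x^* \bar{x} < 0$, so the objective strictly decreases, a contradiction. (The case $x^* = 0$ or $\bar{x} = 0$ is vacuous.) For (b), if $t^* < \bar{t}$, then $(x^*, \bar{t})$ is feasible, since $\bar{t} > t^* \geq |x^*|^p$, and its objective is strictly smaller by $(t^* - \bar{t})^2 > 0$, again a contradiction.

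The core of the proof is (d). I assume strict feasibility $t^* > |x^*|^p$ and move slightly along the segment toward the infeasible target $(\bar{x}, \bar{t})$. Writing $(x_\lambda, t_\lambda) = (1-\lambda)(x^*, t^*) + \lambda (\bar{x}, \bar{t})$, the map $\lambda \mapsto t_\lambda - |x_\lambda|^p$ is continuous and strictly positive at $\lambda = 0$, so it remains positive on some $[0, \delta]$ with $\delta > 0$, meaning the segment stays feasible near $(x^*, t^*)$. The objective at $(x_\lambda, t_\lambda)$ equals $(1-\lambda)^2 \bigl[(x^* - \bar{x})^2 + (t^* - \bar{t})^2\bigr]$, which strictly decreases as $\lambda$ increases from $0$, using the strict positivity of the initial objective noted above. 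This contradicts optimality, so $t^* = |x^*|^p$. Property (c) then follows immediately by combining (b) and (d): $|x^*|^p = t^* \geq \bar{t}$.

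The main obstacle I anticipate is step (d): one must simultaneously verify that a small step toward the infeasible target $(\bar{x}, \bar{t})$ stays inside the feasible set (which rests on the openness of $\{t > |x|^p\}$, itself a continuity statement about $(x,t) \mapsto t - |x|^p$, and would be delicate if we tried to run this argument near the cusp $x=0$) and that the objective strictly decreases. The other properties reduce to one-line exchange arguments, and (a), (b) can be established independently of the interior-point argument for (d).
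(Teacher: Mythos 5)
Your proofs of (a) and (b) are the same style of exchange argument the paper uses (for (a) the paper substitutes $0$ for $x^{*}$ where you substitute $-x^{*}$; both work, and both share the same unaddressed degenerate case where $x^{*}=0$ yet $\bar{x}\neq 0$, which can occur when $\bar{t}<0$ and which the paper's strict inequality $|x^{*}|+|\bar{x}|>|\bar{x}|$ also fails to cover). Where you genuinely diverge is in (c) and (d). The paper proves (c) \emph{first}, by a two-case analysis on the sign of $\bar{x}$ that constructs the comparison point $x_{0}=\pm\bar{t}^{1/p}$ and shows $x^{*}<x_{0}<\bar{x}$ (or its mirror image), and only then derives (d) by the exchange $t_{0}=|x^{*}|^{p}$, which needs (c) to guarantee $t_{0}\geq\bar{t}$. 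You instead prove (d) directly and independently of (a)--(c): if $t^{*}>|x^{*}|^{p}$, the point $(1-\lambda)(x^{*},t^{*})+\lambda(\bar{x},\bar{t})$ stays feasible for small $\lambda>0$ by continuity of $(x,t)\mapsto t-|x|^{p}$, while the objective scales as $(1-\lambda)^{2}g(x^{*},t^{*})$ and $g(x^{*},t^{*})>0$ because the target is infeasible; then (c) drops out as the one-line corollary $|x^{*}|^{p}=t^{*}\geq\bar{t}$. This is correct and arguably cleaner: it eliminates the paper's sign case analysis entirely and gives a tidier dependency structure ((d) standalone, (c) from (b) and (d)), at the mild cost of invoking a topological/continuity argument rather than an explicit comparison point. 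Both routes are valid.
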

\begin{proof} We prove the statements by contradiction as follows:
\begin{enumerate}[(a)]
\item Suppose that $\mathrm{sign}(x^{*}) \neq \mathrm{sign}(\bar{x})$, then
\begin{align}
|x^{*}-\bar{x}|\!=\!|x^{*}-0|\!+\!|\bar{x}-0| > |\bar{x}-0|,
\end{align}
i.e., $(x^{*}\!-\!\bar{x})^{2} \!>\! (0\!-\!\bar{x})^{2}$. Hence, $g(x^{*},t^{*})-g(0,t^{*}) \!>\! 0$. Moreover, the feasibility of $(x^{*},t^{*})$ implies that $t^{*}>0$. Thus, $(0,t^{*})$ is feasible and attains a lower objective value than that attained by $(x^{*},t^{*})$. This contradicts the optimality of 
$(x^{*},t^{*})$.
\item Assume that $t^{*} < \bar{t}$. Then, 
\begin{equation}
g(x^{*},t^{*})-g(x^{*},\bar{t})=(t^{*}-\bar{t})^{2} > 0.
\end{equation}
Furthermore, by the feasibility of $(x^{*},t^{*})$, we have $|x^{*}|^{p} \leq t^{*} < \bar{t}$. Thus, $(x^{*},\bar{t})$ is feasible and attains a lower objective value than that attained by $(x^{*},t^{*})$. This contradicts the optimality of $(x^{*},t^{*})$.
\item Suppose that $|x^{*}|^{p} < \bar{t}$, i.e., 
\begin{equation}\label{|x^*|^p<t^bar}
-\bar{t}^{\frac{1}{p}} < x^{*}< \bar{t}^{\frac{1}{p}}.
\end{equation}
We now consider two cases, $\bar{x}>0$ and $\bar{x}<0$.
First, let $\bar{x} > 0$. Then, we have by $(\textit{a})$ and (\ref{|x^*|^p<t^bar}) that $0<x^{*}<\bar{t}^{\frac{1}{p}}$. Since $\bar{t}<|\bar{x}|^{p}$, i.e., 
$(\bar{x},\bar{t}) \notin \mathcal{X}$, therefore $\bar{t}^{\frac{1}{p}}<\bar{x}$ and hence, $0<x^{*}<\bar{t}^{\frac{1}{p}}<\bar{x}$. Pick $x_{0} > 0$ such that $|x_{0}|^{p}=\bar{t}$, i.e., $x_{0}=\bar{t}^{\frac{1}{p}}$. Then clearly, $x^{*} < x_{0} < \bar{x}$.
Thus, we have 
\begin{equation}\label{obj}
g(x^{*},t^{*})-g(x_{0},t^{*})=(x^{*}-\bar{x})^{2}-(x_{0}-\bar{x})^{2} > 0,
\end{equation}
where the last inequality follows the just proven identity that $x^{*} < x_{0} < \bar{x}$. Moreover, we have $|x_{0}|^{p}=\bar{t} \leq t^{*}$ by $(\textit{b})$. Thus, $(x_{0},t^{*})$ is feasible and attains a lower objective value than that attained by $(x^{*},t^{*})$. This contradicts the optimality of $(x^{*},t^{*})$. 
On the other hand, let $\bar{x}<0$. Then, we have by $(\textit{a})$ and (\ref{|x^*|^p<t^bar}) that 
$-\bar{t}^{\frac{1}{p}} < x^{*} < 0$. Since $\bar{t}<|\bar{x}|^{p}$, i.e., $(\bar{x},\bar{t}) \notin \mathcal{X}$, then $\bar{t}^{\frac{1}{p}}<|\bar{x}|$, i.e., $\bar{x}<-\bar{t}^{\frac{1}{p}}$. Therefore, $\bar{x} < -\bar{t}^{\frac{1}{p}} < x^{*}.$
Pick $x_{0}<0$ such that $|x_{0}|^{p}=\bar{t}$, i.e., $x_{0}=-\bar{t}^{\frac{1}{p}}$. Then, (\ref{obj}) also holds when 
$\bar{x}<0$. Note that $|x_{0}|^{p} = \bar{t} \leq t^{*}$ by $(\textit{b})$. Thus, $(x_{0},t^{*})$ is feasible and attains a lower objective value than that attained by $(x^{*},t^{*})$. This contradicts the optimality of $(x^{*},t^{*})$.
\item The feasibility of $(x^{*},t^{*})$ eliminates the possibility that $t^{*}<|x^{*}|^{p}$. Now let $t^{*}>|x^{*}|^{p}$ and pick $t_{0}=|x^{*}|^{p}$. Then, $\bar{t} \leq |x^{*}|^{p}=t_{0}<t^{*}$, where the first inequality follows from 
$(\textit{c})$. Then, $0 \leq t_{0}-\bar{t} < t^{*}-\bar{t}$. 
Thus, we have
\begin{equation}
g(x^{*},t^{*})-g(x^{*},t_{0})=(t^{*}-\bar{t})^{2}-(t_{0}-\bar{t})^{2} > 0, 
\end{equation}
Furthermore, the feasibility of $(x^{*},t_{0})$ follows trivially from the choice of $t_{0}$. Thus, 
$(x^{*},t_{0})$ is feasible and attains a lower objective value than that attained by $(x^{*},t^{*})$. This contradicts the optimality of $(x^{*},t^{*})$. 
\end{enumerate}
This concludes the proof.
\end{proof}
\vspace{-0.2cm}
We now make use of the fact that for (\ref{Pi_chi}), an optimal solution $(x^{*},t^{*})$ satisfies $t^{*}=|x^{*}|^{p}$ and hence, (\ref{Pi_chi}) reduces to solving
\begin{equation}\label{Pi_chi_scalar}  
\underset{x}{\min}
\quad (|x|^{p}-\bar{t})^{2}+(x-\bar{x})^{2}.
\end{equation}
The first order necessary optimality condition for (\ref{Pi_chi_scalar}) implies the following:
\begin{equation}\label{first_order}
p|x^{*}|^{p-1}\mathrm{sign}(x^{*})(|x^{*}|^{p}-\bar{t})+x^{*}-\bar{x}=0.
\end{equation}
By the symmetry of the function $|x|^{p}$, without loss of generality, assume that $x^{*}>0$ and let $0 < p=\frac{s}{q} < 1$ for some $s,q \in \mathbb{Z}_{+}$. A change of variables $a^{q}=x^{*}$ plugged in (\ref{first_order}) shows that finding an optimal solution for (\ref{Pi_chi}) reduces to finding a root of the following scalar degree $2q$ polynomial:
\begin{equation}\label{polynomial}
a^{2q}+\frac{s}{q}\left(a^{2s}-\bar{t}a^{s}\right)-\bar{x}a^{q}.
\end{equation}
Thus, to find $\Pi_{\mathcal{X}}(\bar{x},\bar{t})$, solve for a root $a^{*}$ of the polynomial in (\ref{polynomial}) such that $(a^{*^q},a^{*^{s}})$ minimizes $g(x,t)$. Algorithm \ref{alg:2} summarizes the method we use to solve problem \eqref{Pi_chi}. In case $\bar{x}=0$, we set $x^{*}=t^{*}=0$. If the set 
$\bar{\mathcal{R}}$ is empty, we set $x^{*}=0$ and $t^{*}=(\bar{t})^{+}$.
 
 \begin{algorithm}[t]
\caption{Non-convex projection ($p=\frac{s}{q}<1$)}
\label{alg:2}
\renewcommand{\thealgorithm}{}
\floatname{algorithm}{}
\begin{algorithmic}[1]
    \State $\mathcal{R} \gets \mathrm{roots} \{ a^{2q}+\frac{s}{q}(a^{2s}-\bar{t}a^{s})-|\bar{x}|a^{q} \}$ 
    \State $\bar{\mathcal{R}} \gets \mathcal{R} \setminus \{\mathrm{complex~numbers~and~negative~reals~in~}\mathcal{R}\}$ 
    \State $\mathcal{T} \gets \{(r^{q},r^{s}) : r \in \bar{\mathcal{R}}\}$ 
    \State $(\hat{x},t^{*}) \gets \mathrm{argmin} ~ \{ g(x,t) : (x,t) \in \mathcal{T} \}$ 
    \State $x^{*} \gets \mathrm{sign}(\bar{x}) \hat{x}$
\end{algorithmic}\end{algorithm}

\subsection{Convex Projection} \label{vec cnvx projec}
The convex projection for $\textbf{y}$-update in (\ref{y:upd}) can be formulated as the following convex optimization problem
\begin{equation} \label{conv_projec_prob}  
\begin{aligned} 
\textbf{y}^{k+1}=\argmin_{\textbf{y}} \quad & \left\lVert \textbf{y}-(\mathbf{x}^{k+1}+\frac{\boldsymbol\lambda^{k}}{\rho}) \right\rVert^{2}, \quad
\textrm{s.t.}  & \textbf{y} \in \mathcal{V},\\
\end{aligned}
\end{equation}
where $\left\lVert.\right\rVert$ is the euclidean norm. Convex problems can be solved by a variety of contemporary methods including bundle methods \cite{18}, sub-gradient projection \cite{19}, interior point methods \cite{20}, and ellipsoid methods \cite{21}. The efficiency of optimization techniques rely mainly on exploiting the structure of the constraint set. As mentioned in \ref{contributions}, to be general, we aim to solve the problem in \eqref{motivProb} with no assumptions on the set $\mathcal{V}$, other than it being closed and convex. That said, 
if possible, through exploiting the structure of $\mathcal{V}$, one should be able to reduce the computational complexity of solving  \eqref{conv_projec_prob}.
\begin{remark}
As per our knowledge, none of the existing literature considered the convergence of an ADMM algorithm for solving the general problem in \eqref{motivProb}. As discussed in \ref{SVR}, on one hand, the work in \cite{ref4} studied the convergence of ADMM under mild assumptions. However, assuming $\mathcal{V}$ has a particular form, these assumptions hold only 
if the function $f$ 
defining the the constraint set $\mathcal{V}=\{\mathbf{x}:\ f(\mathbf{x})\leq 0\}$ in \eqref{motivProb} is Lipschitz differentiable. On the other hand, \cite{ref5} studied the convergence of a non ADMM algorithm to solve \eqref{motivProb} while assuming that the global optimal for each update step can be found efficiently.
\end{remark}
\vspace{-0.4cm}
\section{Rank Minimization Algorithm} \label{section 2}
We consider the same problem as in (\ref{1}) and propose a method for approximating its solution efficiently. The Schatten-p heuristic of (\ref{1}) can be written as 
\begin{equation} \label{2}
\begin{aligned}
\min_{\mathbf{X}} \quad & \left\lVert \mathbf{X}\right\rVert_{p}^{p}\stackrel{\Delta}{=}\sum_{i=1}^{\mathrm{L}}\lvert\sigma_{i}(\mathbf{X})\rvert^{p}, \quad
\textrm{s.t.}  & \mathbf{X} \in \mathcal{M},\\
\end{aligned}
\end{equation}
where $\mathrm{L}=\min(m,n)$ and $\sigma_{i}(\mathbf{X})$ is the $i$th singular value of $\mathbf{X}$. When $p=1$, problem (\ref{2}) is a convex one which is eventually the nuclear norm heuristic. We consider a non-convex case where $0<p<1$, which has the corresponding epi-graph form,
\begin{equation} \label{3}
\begin{aligned}  
\min_{\mathbf{X},\mathbf{t}} \quad & \mathbf{1}^{\top}\mathbf{t}, \\
\textrm{s.t.} \quad & \lvert \sigma_{i}(\mathbf{X}) \rvert^{p}\leq t_{i}, \quad i \in \{1,\dots \mathrm{L} \}, \quad \mathbf{X} \in \mathcal{M},
\end{aligned}
\end{equation}
such that $\mathbf{t}=[t_{i}]_{i\in [\mathrm{L}]}$. Defining the epi-graph set $\mathring{\mathcal{X}}$ for the function $\sigma(X)$, where $\mathring{\mathcal{X}}\stackrel{\Delta}{=}\{(\sigma(\mathbf{X}),t)\!\in\! \mathbb{R}^{2}\!:\!\lvert\sigma(\mathbf{X})\rvert^{p}\!\leq\! t\} \subseteq \mathbb{R}^{2}$, the problem in (\ref{3}) can be written as,
\begin{equation} \label{4}
\begin{aligned}
\min_{\mathbf{X},\mathbf{t}} \quad & \mathbf{1}^{\top}\mathbf{t}+\mathbbm{1}_{\mathcal{M}}(\mathbf{X})+\sum_{i=1}^{\mathrm{L}}\mathbbm{1}_{\mathring{\mathcal{X}}}(\sigma_{i}(\mathbf{X}), t_{i}).
\end{aligned}
\end{equation}

In order to structure the problem in a from that ADMM can exploit, we introduce the auxiliary variables $\mathbf{Y} \in \mathbb{R}^{m\times n}$ and $\mathbf{z}=[z_{i}]_{i\in [\mathrm{L}]}$ which makes the problem in (\ref{4}) be,
\begin{equation} \label{5}
\begin{aligned}
\min_{\mathbf{X},\mathbf{t},\mathbf{Y},\mathbf{z}} \quad & \mathbf{1}^{\top}\mathbf{z}+\mathbbm{1}_{\mathcal{V}}(\mathbf{Y})+\sum_{i=1}^{\mathrm{L}}\mathbbm{1}_{\mathring{\mathcal{X}}}(\sigma_{i}(\mathbf{X}), t_{i}), \\
\textrm{s.t.} \quad & \mathbf{X}=\mathbf{Y}: \boldsymbol\Lambda, \quad
\mathbf{t}=\mathbf{z}: \boldsymbol\theta, 
\end{aligned}
\end{equation}
such that $\boldsymbol\Lambda$, $\boldsymbol\theta$ are the dual variables associated with $\mathbf{X}$ and $\mathbf{t}$ respectively. Similar to (\ref{Lagrangian}), the Lagrangian function associated with (\ref{5}) augmented with a quadratic penalty for the equality constraint violation with a parameter $\rho>0$, is  
\begin{equation}
\begin{aligned}
    &\mathcal{L}_{\rho}(\mathbf{X},\mathbf{Y},\mathbf{t},\mathbf{z},\boldsymbol\Lambda,\boldsymbol\theta)\!=\!\mathbf{1}^{\top}\mathbf{z}\!+\!\mathbbm{1}_{\mathcal{M}}(\mathbf{Y})\!+\!\sum_{i=1}^{\mathrm{L}}\mathbbm{1}_{\mathring{\mathcal{X}}}(\sigma_{i}(\mathbf{X}), t_{i})\\&+\!Tr\{\boldsymbol\Lambda^{\top} (\mathbf{X}\!-\!\mathbf{Y})\}\!+\!\boldsymbol\theta^{\top}(\mathbf{t}\!-\!\mathbf{z})\!+\!\frac{\rho}{2}(\left\lVert \mathbf{X}\!-\!\mathbf{Y} \right\rVert_{\mathrm{f}}^{2}\!+\!\left\lVert \mathbf{t}\!-\!\mathbf{z}\right\rVert^{2}),
\end{aligned}
\end{equation}
where $Tr\{.\}$ is the trace operator. Considering the 2-tuples $(\mathbf{X},\mathbf{t})$ and $(\mathbf{Y},\mathbf{z})$, the ADMM iterations is,
\begin{align}
    &(\mathbf{X},\mathbf{t})^{k+1}=\argmin_{\mathbf{X},\mathbf{t}} \mathcal{L}_{\rho}(\mathbf{X},\mathbf{Y}^{k},\mathbf{t},\mathbf{z}^{k},\boldsymbol\Lambda^{k},\boldsymbol\theta^{k}), \label{6} \\ 
    &\mathbf{Y}^{k+1}=\argmin_{\mathbf{Y}} \mathcal{L}_{\rho}(\mathbf{X}^{k+1},\mathbf{Y},\mathbf{t}^{k+1},\mathbf{z}^{k},\boldsymbol\Lambda^{k},\boldsymbol\theta^{k}), \label{12}\\
    &\mathbf{z}^{k+1}=\argmin_{\mathbf{z}} \mathcal{L}_{\rho}(\mathbf{X}^{k+1},\mathbf{Y}^{k+1},\mathbf{t}^{k+1},\mathbf{z},\boldsymbol\Lambda^{k},\boldsymbol\theta^{k}), \label{14}\\
    &\boldsymbol\Lambda^{k+1}=\boldsymbol\Lambda^{k}+\rho(\mathbf{X}^{k+1}-\mathbf{Y}^{k+1}), \\
    &\boldsymbol\theta^{k+1}=\boldsymbol\theta^{k}+\rho(\mathbf{t}^{k+1}-\mathbf{z}^{k+1}).
\end{align}
\subsection{$(\mathbf{X},\mathbf{t})$ update}
By completing the square and with some simple algebra, it can be shown that the problem in (\ref{6}) is equivalent to
\begin{equation}  \label{7}
\begin{aligned}
\min_{\mathbf{X},\mathbf{t}} \quad & \left\lVert \mathbf{X}-\Bar{\mathbf{X}}^{k} \right\rVert_{\mathrm{f}}^{2}+\left\lVert \mathbf{t}-\mathbf{\Bar{t}}^{k}\right\rVert^{2}, \\
\textrm{s.t.} \quad & \lvert \sigma_{i}(\mathbf{X}) \rvert^{p}\leq t_{i}, \quad i \in \{1,\dots \mathrm{L} \},\\
\end{aligned}
\end{equation}
where $\Bar{\mathbf{X}}^{k}\stackrel{\Delta}{=}\mathbf{Y}^{k}-\frac{\boldsymbol\Lambda^{k}}{\rho}$ and $\mathbf{\Bar{t}}^{k}\stackrel{\Delta}{=}\mathbf{z}^{k}-\frac{\boldsymbol\theta^{k}}{\rho}$. For an ease of notations, we will drop the iteration index $k$. Assume that $\mathbf{X}=\mathbf{P}\mathbf{\Sigma} \mathbf{Q}^{\top}$ and $\Bar{\mathbf{X}}=\mathbf{U}\mathbf{\Delta} \mathbf{V}^{\top}$ is the singular value decomposition (SVD) of $\mathbf{X}$ and $\Bar{\mathbf{X}}$ respectively. Where $\mathbf{\Sigma}, \mathbf{\Delta} \in \mathbb{R}^{\mathrm{L}\times \mathrm{L}}$ are diagonal matrices with the singular values associated $\mathbf{X}$ and $\Bar{\mathbf{X}}$ while $\mathbf{P}, \mathbf{U} \in \mathbb{R}^{m\times \mathrm{L}}$ and $\mathbf{Q}, \mathbf{V} \in \mathbb{R}^{n\times \mathrm{L}}$ are the unitary matrices. By applying the same steps as in Theorem 3 of \cite{12}, we can write the first term of (\ref{7}) after dropping $k$ as,
\begin{equation}
    \begin{aligned}
    &\left\lVert \mathbf{X}-\Bar{\mathbf{X}} \right\rVert_{\mathrm{f}}^{2}=\left\lVert \mathbf{P}\mathbf{\Sigma} \mathbf{Q}^{\top}-\mathbf{U}\mathbf{\Delta} \mathbf{V}^{\top} \right\rVert_{\mathrm{f}}^{2} \\
    &=\left\lVert \mathbf{P}\mathbf{\Sigma} \mathbf{Q}^{\top} \right\rVert_{\mathrm{f}}^{2}+\left\lVert \mathbf{U}\mathbf{\Delta} \mathbf{V}^{\top} \right\rVert_{\mathrm{f}}^{2}-2T\{\mathbf{X}^{\top}\Bar{\mathbf{X}}\}\\
    &\stackrel{(\mathrm{a})}{=}\!Tr\{ \mathbf{\Sigma}^{\top}\!\mathbf{\Sigma} \}\!+\!Tr\{ \mathbf{\Delta}^{\top}\!\mathbf{\Delta} \}\!-\!2Tr\{\mathbf{Q}\mathbf{\Sigma}^{\top}\!\mathbf{P}^{\top}\!\mathbf{U}\!\mathbf{\Delta}\!\mathbf{V}^{\top}\!\}\\
    &\stackrel{(\mathrm{b})}{\geq}Tr\{ \mathbf{\Sigma}^{\top}\! \mathbf{\Sigma} \}\!+\!Tr\{ \mathbf{\Delta}^{\top}\! \mathbf{\Delta} \}\!-\!2Tr\{ \mathbf{\Sigma}^{\top}\!\mathbf{\Delta} \}
    \!=\!\left\lVert \mathbf{\Sigma}\!-\!\mathbf{\Delta} \right\rVert_{\mathrm{f}}^{2},
    \end{aligned}
\end{equation}
where (a) is because $\mathbf{P}^{\top}\mathbf{P}=\mathbf{Q}^{\top}\mathbf{Q}=\mathbf{U}^{\top}\mathbf{U}=\mathbf{V}^{\top}\mathbf{V}=\mathbf{I}_{\mathrm{L}\times \mathrm{L}}$ with $\mathbf{I}_{\mathrm{L}\times \mathrm{L}}$ being an identity matrix of size $\mathrm{L}$, and exploiting the circular property of the trace while (b) holds is from the main result of \cite{13}. In order to make $\left\lVert \mathbf{X}-\Bar{\mathbf{X}}^{k} \right\rVert_{\mathrm{f}}^{2}$ achieve its derived lower bound, we set $\mathbf{P}=\mathbf{U}$ and $\mathbf{Q}=\mathbf{V}$. 

Henceforth, the problem in (\ref{7}) will be equivalent to,
\begin{equation}  \label{8}
\begin{aligned}
\min_{\mathbf{X},\mathbf{t}} \quad & \left\lVert \mathbf{x}-\mathbf{\Bar{x}} \right\rVert^{2}+\left\lVert \mathbf{t}-\mathbf{\Bar{t}}\right\rVert^{2}, \\
\textrm{s.t.} \quad & \lvert x_{i} \rvert^{p}\leq t_{i}, \quad i \in \{1,\dots \mathrm{L} \},\\
\end{aligned}
\end{equation}
where $\mathbf{x}=[x_{i}]_{i\in [\mathrm{L}]}$ and $\mathbf{\Bar{x}}=[\bar{x}_{i}]_{i\in [\mathrm{L}]}$ are the vectors of singular values of the matrices $\mathbf{X}$ and $\Bar{\mathbf{X}}$ respectively. The optimal solution $\mathbf{X}^{*}$ for (\ref{7}) can be calculated by finding the optimal $\mathbf{x}^{*}$ of (\ref{8}) and then $\mathbf{X}^{*}=\mathbf{U} \mathbf{\Sigma}^{*} \mathbf{V}^{T}$, where $\mathbf{\Sigma}^{*}=$diag$(\mathbf{x}^{*})$ and diag$(.)$ is an operator that converts a vector to its corresponding diagonal matrix. Since the problem in (\ref{8}) is separable, we drop the index $i$ and only consider solving
\begin{equation}  \label{9}
\begin{aligned}
\min_{x,t} \quad & (x-\Bar{x})^{2}+(t-\Bar{t})^{2}, \quad
\textrm{s.t.}  & \lvert x \rvert^{p}\leq t.\\
\end{aligned}
\end{equation}

It can be realized that (\ref{9}) is the same as (\ref{Pi_chi}), hence, its optimal solution can be found by applying algorithm \ref{alg:2}.
\subsection{$(\mathbf{Y},\mathbf{z})$ update}
After updating $(\mathbf{X},\mathbf{t})$ while fixing $\boldsymbol\Lambda$ and $\boldsymbol\theta$, the problem in (\ref{12}) can be written as,
\begin{equation}  \label{13}
\begin{aligned}
\mathbf{Y}^{k+1}\!=\!\argmin_{\mathbf{Y}} & \left\lVert \mathbf{Y}\!-\!(\mathbf{X}^{k+1}\!+\!\frac{\boldsymbol\Lambda^{k}}{\rho}) \right\rVert_{\mathrm{f}}^{2}, 
\textrm{s.t.} \quad \mathbf{Y} \in \mathcal{M},\\
\end{aligned}
\end{equation}
which is clearly a convex optimization problem representing the projection of the point $\mathbf{X}^{k+1}+\frac{\boldsymbol\Lambda^{k}}{\rho}$ on the set $\mathcal{M}$ and can be solved by various known class of algorithms as discussed in section \ref{vec cnvx projec}. 

Upon updating $\mathbf{Y}$, the $\mathbf{z}$ update in (\ref{14}) is
\begin{equation}  
\begin{aligned}
\mathbf{z}^{k+1}=\argmin_{\mathbf{z}} \quad & \mathbf{1}^{\top}\mathbf{z}+\frac{\rho}{2}\left\lVert \mathbf{z}-(t^{k+1}+\frac{\boldsymbol\theta^{k}}{\rho}) \right\rVert^{2}, \\
\end{aligned}
\end{equation}
which has the closed-form solution $\mathbf{z}=\mathbf{t}^{k+1}+\frac{\boldsymbol\theta^{k}-1}{\rho}$.

\section{Proximal Gradient Algorithm}
The SVR algorithm 
deals with the $\ell_{p}$ relaxation of \eqref{zero norm} 
without assuming any specific structure for $\mathcal{V}$, other than being closed and convex. 
Indeed, the algorithm only requires the Euclidean projections onto $\mathcal{V}$ as in \eqref{conv_projec_prob}. However, this approach suffers from two pitfalls: 1) 
high computational complexity per iteration as a result of solving \eqref{conv_projec_prob} in every iteration, and 2) the lack of 
convergence guarantees. 

In this 
section, we consider a 
sub-class of problems 
with a specific structure for the convex set of the form $\mathcal{V}=\{\mathbf{x}:~f(\mathbf{x})\leq 0\}$, where 
$f(\mathbf{x})=\left\lVert A\mathbf{x}-b\right\rVert^{2}-\epsilon$ for some given $\epsilon\geq 0$, $A\in \mathbb{R}^{m\times n}$ and $b\in\mathbb{R}^{m}$. Note that 
$f(\mathbf{x})$ is a convex function with Lipschitz continuous gradient. i.e., $f$ is $L$-smooth: $\left\lVert\nabla f(\mathbf{x})-\nabla f(\mathbf{y})\right\rVert \leq L\left\lVert \mathbf{x}-\mathbf{y}\right\rVert$ for all $\mathbf{x},\mathbf{y} \in \mathbb{R}^{n}$ and $L\triangleq\|A\|^2$. Specifically, 
in order to solve
\begin{equation} \label{PG_intro}
\begin{aligned}
\min_{\mathbf{x}} \quad &  \left\lVert \mathbf{x}\right\rVert_{p}^{p}, \quad
\textrm{s.t.} & f(\mathbf{x})\leq 0,
\end{aligned}
\end{equation}
we aim to develop an 
efficient algorithm 
with some convergence guarantees 
for the following Lagrangian relaxation: 
\begin{equation} \label{PG}
\begin{aligned} 
& \underset{\mathbf{x}}{\min}
& F(\mathbf{x})\stackrel{\Delta}{=}\left\lVert \mathbf{x}\right\rVert_{p}^{p}+\frac{\mu}{2}f(\mathbf{x}), 
\end{aligned}
\end{equation}
where $\mu\geq 0$ is 
the dual multiplier that captures the trade-off between solution sparsity and fidelity.

A canonical problem for the regularized risk minimization has the following form:
\begin{equation} \label{orig_PG}
\begin{aligned} 
& \underset{\mathbf{x}}{\min}
& g(\mathbf{x})+h(\mathbf{x})
\end{aligned}
\end{equation}
where $h$ is an $L$-smooth loss function and $g$ is a a regularizer term. When both $g$ and $h$ are convex, the proximal gradient (PG) algorithm \cite{PG1} 
can compute a solution to~\eqref{orig_PG} through iteratively taking PG steps, i.e., $\mathbf{x}^{k+1}=\mathbf{prox}_{g/\lambda}(\mathbf{x}^{k}-\nabla h(\mathbf{x}^{k})/L)$ where $\mathbf{prox}_{g/\lambda}(.)\stackrel{\Delta}{=}\argmin_{\mathbf{x}} g(\mathbf{x}) + \frac{\lambda}{2} \left\lVert \mathbf{x}-\cdot\right\rVert^{2}$, for some constant $\lambda$. 
When $g$ is convex, prox operation is well-defined; thus, the PG step can be computed. 

Comparing both \eqref{PG} and \eqref{orig_PG}, 
the convexity assumption of $g(\mathbf{x})$ in \eqref{orig_PG} is not satisfied 
for $\|\mathbf{x}\|_{p}^{p}$ in \eqref{PG}. 
When the regularizer is a continuous nonconvex function, the proximal map $\mathbf{prox}_{g/\lambda}$ may not exist, let alone it can be computed in closed form.
On the other hand, for $\|\mathbf{x}\|_{p}^{p}$, using similar arguments for the non-convex projection step introduced in subsection~\ref{non_cnvx_proj}, we aim to derive an analytical solution that can be computed efficiently.
Indeed, assuming $p\in(0,1)$ is a positive rational number, the proposed method for computing the proximal map of $\|\mathbf{x}\|_{p}^{p}$ involves finding the roots of a polynomial of order $2q$, where $q\in\mathbb{Z}_+$ 
such that $p=s/q$ for some $s\in\mathbb{Z}_+$.

Since $f$ is $L$-smooth, for all $\mathbf{x}, \mathbf{y} \in \mathbb{R}^{n}$, we have 
\begin{equation} \label{Lipshtiz_ineq}
    f(\mathbf{x})\leq f(\mathbf{y})+\nabla f(\mathbf{y})^{\top}(\mathbf{x}-\mathbf{y})+\frac{L}{2}\left\lVert \mathbf{x}-\mathbf{y}\right\rVert^{2}.
\end{equation}
Given $\mathbf{x}^k$, replacing $f(\mathbf{x})$ with the upper bound in~\eqref{Lipshtiz_ineq} for $\mathbf{y}=\mathbf{x}^k$, the prox-gradient operation naturally arises as follows:
\begin{equation} \label{PG_update}  
\begin{aligned} 
\mathbf{x}^{k+1}=\argmin_{\mathbf{X}}\left\lVert \mathbf{x}\right\rVert_{p}^{p}+\frac{\mu}{2}&[f(\mathbf{x}^{k})+\nabla f(\mathbf{x}^{k})^{\top}(\mathbf{x}-\mathbf{x}^{k}) \\ &+\frac{L}{2}\left\lVert \mathbf{x}-\mathbf{x}^{k}\right\rVert^{2}].
\end{aligned}
\end{equation}
By completing the square, \eqref{PG_update} yields to
\begin{equation} \label{PG_update_completed}  
\begin{aligned} 
\mathbf{x}^{k+1}=\argmin_{\mathbf{X}}\left\lVert \mathbf{x}\right\rVert_{p}^{p}+\frac{\mu L}{4}\left\lVert \mathbf{x}-\left(\mathbf{x}^{k}-\frac{1}{L}\nabla f(\mathbf{x^{k}})\right)\right\rVert^{2}.
\end{aligned}
\end{equation}
Defining $\bar{\mathbf{x}}^{k}\stackrel{\Delta}{=}\mathbf{x}^{k}-\frac{1}{L}\nabla f(\mathbf{x^{k}})$, \eqref{PG_update_completed} can be rewritten as
\begin{equation}
\begin{aligned} 
\mathbf{x}^{k+1}&=\argmin_{\mathbf{X}}\left\lVert \mathbf{x}\right\rVert_{p}^{p}+\frac{\mu L}{4}\left\lVert \mathbf{x}-\bar{\mathbf{x}}^{k}\right\rVert^{2} \\
&=\argmin_{\mathbf{X}}\sum_{i=1}^{n}|x_{i}|^{p}+\frac{\mu L}{4}(x_{i}-\bar{x}_{i}^{k})^{2},
\end{aligned}
\end{equation}
which is clearly a separable structure in the entries of $\mathbf{x}$. Therefore, for each 
$i\in [n]$, we have
\begin{equation} \label{sep_obj}
\begin{aligned} 
x_{i}^{k+1}\!=\!\argmin_{x_{i}}|x_{i}|^{p}\!+\!\frac{\mu L}{4}(x_{i}\!-\!\bar{x}_{i}^{k})^{2} 
=\mathbf{prox}_{\bar{g}/\frac{\mu L}{2}}(\bar{x}_{i}^{k}),
\end{aligned}
\end{equation}
where $\bar{g}:\mathbb{R}\to\mathbb{R}_+$ such that $\bar{g}(t)=|t|^{p}$ for some positive rational $p\in(0,1)$. 

Next, we consider a generic form of \eqref{sep_obj}, i.e., given some $\bar{t}\in\mathbb{R}$, we would like to compute \begin{equation}
\label{eq:generic-prox}
    t^*=\argmin_{t}\{|t|^p+\frac{\mu L}{4}(t-\bar{t})^2\}.
\end{equation}
The first-order optimality condition for \eqref{eq:generic-prox} can be written as
\begin{equation} \label{first_order_optimiality_condition}
\begin{aligned} 
p|t^{*}|^{p-1}\text{sign}(t^*)+\frac{\mu L}{2}(t^{*}-\bar{t})=0. 
\end{aligned}
\end{equation}
Using similar arguments with those in section \ref{non_cnvx_proj} for Proposition~\ref{Proposition1}, we can conclude that the optimal solution $t^{*}$ attains the property that $\text{sign}(t^{*})=\text{sign}(\bar{t})$. 
Without loss of generality, exploiting the symmetry of the function $\bar{g}$, we only consider the case when $\bar{t}>0$; hence, the optimal solution $t^{*}$ is the smallest positive root of the following polynomial:
\begin{equation} \label{first_order_optimiality_condition_reduced}
\begin{aligned} 
p|t^{*}|^{p-1}+\frac{\mu L}{2}(t^{*}-\bar{t})=0.
\end{aligned}
\end{equation}
As in \eqref{polynomial}, suppose $0 < p=\frac{s}{q} < 1$ for some $s,q \in \mathbb{Z}_{+}$. Using the change of variables $a\triangleq (t^{*})^{\tfrac{1}{q}}$, 
\eqref{first_order_optimiality_condition_reduced} reduces to finding the roots of a polynomial of degree $2q$: 
\begin{equation} \label{roots_prob_for_PG}
\begin{aligned} 
a^{2q}-\bar{t}a^{q}+\frac{2s}{q\mu L}a^{s}=0.
\end{aligned}
\end{equation} 

\begin{algorithm}[t]
\caption{Accelerated PG algorithm}
\label{PG_alg}
{\small
\renewcommand{\thealgorithm}{}
\floatname{algorithm}{}
\begin{algorithmic}[1]
    \State Initialize: $\mu$, $s=1$, $q=2$, $l$, $\mathbf{x}^{0}$, $\mathbf{x}^{1}$, $k=1$.
    \Repeat
    \State $\mathbf{y}^{k}=\mathbf{x}^{k}+\frac{k-1}{k+2}(\mathbf{x}^{k}-\mathbf{x}^{k-1})$
    \State $\Delta^{k}=\max_{t=\max\{1,k-l\},\dots, k} F(\mathbf{x}^{t})$
    \If{$F(\mathbf{y}^{k})\leq \Delta^{k}$}:
        \State $\mathbf{v}^{k}=\mathbf{y}^{k}$
    \Else:
        \State $\mathbf{v}^{k}=\mathbf{x}^{k}$
    \EndIf
    \State $\bar{\mathbf{x}}^{k}=\mathbf{v}^{k}-\frac{1}{L}\nabla f(\mathbf{v}^{k})$
    \For{\texttt{$i \in [n]$}}:
        \State \texttt{solve} $a^{2q}-\bar{x}_{i}a^{q}+\frac{2s}{q \mu L}a^{s}=0$
        \State $x_{i}^{k+1}=a^{*^{q}}$
    \EndFor
    \State $k=k+1$
    \Until{convergence}
\end{algorithmic}}%
\end{algorithm}
To efficiently solve \eqref{PG_intro}, we will use Algorithm \ref{PG_alg}, which is an implementation of nonconvex inexact accelerated proximal gradient (APG) descent method proposed in \cite[Algorithm~2]{APG_method}. 
To summarize, \cite[Algorithm~2]{APG_method} is designed to solve composite problems of the form in~\eqref{orig_PG} assuming that $h$ is $L$-smooth and $g$ is proper lower-semicontinuous such that $F\triangleq h+g$ is bounded from below and coercive, i.e., $\lim_{\Vert \Vert\to\infty}F()=+\infty$ -- note that there is no assumption regarding neither $h$ nor $g$ to be convex. The key points enhancing both practical behavior of and theoretical guarantees for \cite[Algorithm~2]{APG_method} can be summarized as given below: \looseness=-1
\begin{itemize}
    \item An extrapolation $\mathbf{y}_{k}$ is generated as introduced in \cite{APG} for the APG algorithm (step 3).
    \item Steps 4 through 9 allow non monotone update of the objective. $F(\mathbf{y}_{k})$ is checked with respect to the maximum of the latest $l$ objective values. The gradient step is adjusted according to this (step 9). This permits $\mathbf{y}^{k}$ to occasionally increase the objective and makes $F(\mathbf{y}^{k})$ be less than the maximum of the objective value of the latest $l$ iterations. 
    \item Steps 11 and 12 are the solution of the PG step using the non-convex projection method. 
\end{itemize}
In the next part, we show that algorithm \ref{PG_alg} converges to a critical point and it exhibits a convergence rate of $O(\frac{1}{k})$, where $k$ is the iteration budget that is given to the algorithm. 
\begin{definition} (\cite{attouch2013convergence})
The Frechet sub-differential of $F$ at $\mathbf{x}$ is
\begin{equation}
    \hat{\partial}F(\mathbf{x})\stackrel{\Delta}{=}\Big\{\mathbf{u}: \lim_{\mathbf{y}\neq\mathbf{x}}\lim_{\mathbf{y}\to\mathbf{x}} \frac{F(\mathbf{y})-F(\mathbf{x})-\mathbf{u}^{\top}(\mathbf{y}-\mathbf{x})}{\left\lVert\mathbf{y}-\mathbf{x}\right\rVert}\geq0\Big\}.
\end{equation}
The sub-differential of $F$ at $\mathbf{x}$ is 
\begin{equation}
\begin{aligned}
    \partial F(\mathbf{x})\stackrel{\Delta}{=}\{&\mathbf{u}: \exists \mathbf{x}^{k}\to \mathbf{x}, F(\mathbf{x}^{k})\to F(\mathbf{x})\enspace\text{and}\enspace\mathbf{u}^{k}\in \hat{\partial}F(\mathbf{x}^{k}) \\ 
    &\to \mathbf{u}\enspace\text{as}\enspace k\to \infty \}.
\end{aligned}
\end{equation}
\end{definition}
\begin{definition} \label{def2}
(\cite{attouch2013convergence}) $\mathbf{x}$ is a critical point of $F$ if $0\in \partial g(\mathbf{x})+\nabla h(\mathbf{x})$. 
\end{definition}
By comparing \eqref{orig_PG} and \eqref{PG}, it can be realized that the functions $g(\mathbf{x})$ and $h(\mathbf{x})$ in definition \ref{def2} are equal to $\left\lVert \mathbf{x}\right\rVert_{p}^{p}$ and $\frac{\mu}{2}f(\mathbf{x})$ respectively.
\begin{theorem}
The sequence $\mathbf{x}^{k}$ generated from algorithm \ref{PG_alg} has at least one limit point and all the generated limit points are critical points of \eqref{PG}. Moreover, the algorithm converges with rate $O(\frac{1}{K})$, where $K$ is the iteration budget given to the algorithm. \looseness=-1
\end{theorem}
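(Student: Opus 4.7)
The plan is to cast Algorithm~\ref{PG_alg} as a direct instance of the nonconvex inexact accelerated proximal gradient method of \cite[Algorithm~2]{APG_method} and invoke its convergence theorem. In the notation of~\eqref{orig_PG}, identify $g(\mathbf{x})=\left\lVert\mathbf{x}\right\rVert_{p}^{p}$ and $h(\mathbf{x})=\tfrac{\mu}{2}f(\mathbf{x})$, so that $F=g+h$ coincides with the objective in~\eqref{PG} and $h$ is $L_{h}$-smooth with $L_{h}=\tfrac{\mu}{2}\left\lVert A\right\rVert^{2}$. The extrapolation step, the non-monotone reference value $\Delta^{k}$, and the prox-gradient update in Algorithm~\ref{PG_alg} are then exactly those of \cite[Algorithm~2]{APG_method}; the prox step $\mathbf{prox}_{g/(\mu L/2)}(\bar{\mathbf{x}}^{k})$ in steps~11--12 is computed coordinate-wise by the polynomial-root procedure derived in~\eqref{roots_prob_for_PG}, which returns a \emph{global} minimizer of each separable one-dimensional subproblem in~\eqref{sep_obj} and therefore supplies the sufficient-decrease inequality required in \cite{APG_method}.

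Next I would verify the standing hypotheses of \cite[Algorithm~2]{APG_method}: (i) $h$ is $L_{h}$-smooth, checked above; (ii) $g$ is proper lower-semicontinuous, since $t\mapsto|t|^{p}$ is continuous on $\mathbb{R}$ for $p\in(0,1)$; (iii) $F$ is bounded from below because $g\ge 0$ and $h(\mathbf{x})=\tfrac{\mu}{2}(\left\lVert A\mathbf{x}-b\right\rVert^{2}-\epsilon)\ge-\tfrac{\mu\epsilon}{2}$; and (iv) $F$ is coercive. For (iv), a compactness argument on the unit sphere $S=\{\mathbf{u}:\left\lVert\mathbf{u}\right\rVert=1\}$ shows that $m\triangleq\min_{\mathbf{u}\in S}\sum_{i\in[n]}|u_{i}|^{p}>0$ (a positive minimum attained by continuity and compactness), whence writing $\mathbf{x}=r\mathbf{u}$ with $r=\left\lVert\mathbf{x}\right\rVert$ and $\mathbf{u}\in S$ yields $\left\lVert\mathbf{x}\right\rVert_{p}^{p}\ge m r^{p}\to\infty$ as $\left\lVert\mathbf{x}\right\rVert\to\infty$, while $h$ is bounded below; hence every sublevel set of $F$ is bounded.

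With the hypotheses verified, coercivity together with the non-monotone acceptance of steps~4--9 confines the iterates to the closed and bounded (hence compact) sublevel set $\{F\le F(\mathbf{x}^{1})\}$, and Bolzano--Weierstrass yields at least one limit point. Criticality of every limit point in the sense of Definition~\ref{def2} then follows from the subsequential-convergence statement of \cite{APG_method} applied to the pair $(g,h)$, combined with the Fr\'echet calculus identity $\partial F=\partial g+\nabla h$ that holds because $h\in C^{1}$. Finally, the $O(1/K)$ rate on $\min_{1\le k\le K}\mathrm{dist}(\mathbf{0},\partial F(\mathbf{x}^{k}))^{2}$ is obtained by telescoping the per-iteration sufficient-decrease inequality over $K$ steps, as in the companion rate statement of \cite{APG_method}.

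The main obstacle I anticipate is a careful treatment of the non-convex prox: for $p\in(0,1)$ the map $\mathbf{prox}_{g/\lambda}$ is set-valued, so one must argue that the root-finding in steps~11--12 delivers a point satisfying exactly the global-minimality condition required by \cite{APG_method}, not merely a stationary one. This is handled by noting that (a) each one-dimensional subproblem in~\eqref{sep_obj} attains its infimum because its objective is proper, lower-semicontinuous and coercive, (b) the polynomial derived in~\eqref{roots_prob_for_PG} has only finitely many real roots and its smallest positive root of minimum objective value is selected, and (c) by separability the concatenation of coordinate-wise global minimizers is a global minimizer of the $n$-dimensional prox subproblem. Together these deliver the exact descent and first-order conditions assumed in \cite[Algorithm~2]{APG_method}, after which the conclusion is a transparent appeal to that paper's convergence theorem.
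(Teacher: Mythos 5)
Your proposal follows essentially the same route as the paper: identify $g=\left\lVert\cdot\right\rVert_p^p$ and $h=\tfrac{\mu}{2}f$, verify the standing hypotheses of \cite[Algorithm~2]{APG_method} (lower semicontinuity of $g$, $L$-smoothness of $h$, boundedness below, coercivity, exactness of the polynomial-root prox), and invoke that paper's critical-point and rate results. Your version is somewhat more careful than the paper's in spelling out the coercivity argument and in noting that the correct lower bound on $F$ is $-\tfrac{\mu\epsilon}{2}$ rather than $0$, but the underlying argument is the same.
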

\begin{proof}
It can easily be verified that our problem in \eqref{PG} satisfies all required assumptions for Algorithm~\ref{PG_alg}.
Indeed, 
\begin{enumerate}
    \item The function $g(\mathbf{x})=\left\lVert \mathbf{x}\right\rVert_{p}^{p}$ is a proper and lower semi-continuous function.
    \item The gradient of $h(\mathbf{x})=\frac{\mu}{2}f(\mathbf{x})$ is $\bar{L}$-Lipschitz smooth, i.e., $\left\lVert\nabla h(\mathbf{x})-\nabla h(\mathbf{y})\right\rVert \leq \bar{L}\left\lVert \mathbf{x}-\mathbf{y}\right\rVert$ for all $\mathbf{x},\mathbf{y} \in \mathbb{R}^{n}$, with $\bar{L}=\frac{\mu}{2}L$.
    \item $F(\mathbf{x})\!=\!g(\mathbf{x})\!+\!h(\mathbf{x})$ is bounded from below, i.e., $F(\mathbf{x})\!\geq\!0$.
    \item $\lim_{\left\lVert\mathbf{x}\right\rVert\to\infty}F(\mathbf{x})=\infty$. 
    \item The introduced non-convex projection method is an exact solution for the proximal gradient step. This is because it is based on finding the roots of a polynomial of order $2q$ in equation \eqref{roots_prob_for_PG}. 
\end{enumerate}
Therefore, the assumptions required for theorem 4.1 for critical point convergence and proposition 4.3 for the rate of convergence in \cite{APG_method} are satisfied which then completes the proof.  
\end{proof}
\vspace{-0.3cm}
\begin{remark}
The global convergence of several exact iterative methods that solve \eqref{orig_PG} has been explored, under the framework of Kurdyka–Lojasiewicz (KL) theory, in various additional literature including \cite{attouch2010proximal,attouch2013convergence,bolte2014proximal,razaviyayn2013unified,tseng2009coordinate}. Other work (see \cite{hu2021linear} and references therein) considered the linear convergence of non-exact algorithms with relaxations on the assumptions of KL theory, however, it is difficult to verify that the sequence generated by algorithm \ref{PG_alg} satisfies the relaxed assumptions stated in ~\cite{hu2021linear}. 
\end{remark}
\vspace{-0.5cm}
\section{Numerical Results for SVR Problem}
In this section, we present two numerical examples for the p-quasi-norm ADMM (pQN-ADMM) from algorithm \ref{alg:1} and the non-convex projection from algorithm \ref{alg:2}. For both examples, the pQN-ADMM algorithm result is compared with the $\ell_{1}$ objective function solution from MOSEK solver \cite{mosek}. The two examples include; i) Sparse signal reconstruction from noisy measurements, where the pQN-ADMM algorithm is also compared with another $\ell_{0.5}$ quasi-norm minimization based algorithm, named $\ell_{0.5}$-FL, described in \cite{FOUCART2009395}. ii) Binary classification using support vector machines (SVM).
\vspace{-0.3cm}
\subsection{Sparse Signal Reconstruction} \label{SSR}
Let $n=2^{10}$ and $m=n/4$, randomly construct the sparse binary matrix, $\textbf{M}\in \mathbb{R}^{m \times \frac{n}{2}}$, with a few number of ones in each column. The number of ones in each column of $\mathbf{M}$ is generated independently and randomly in the range of integers between $10$ and $20$, and their locations are randomly chosen independently for each column. Let $\mathbf{U}=[\mathbf{M},-\mathbf{M}]$, which is the vertical concatenation of the matrix $\textbf{M}$ and its negative. Following the same setup in \cite{22}, the column orthogonality in $\mathbf{U}$ is not satisfied. Let $\mathbf{x}_{\mathrm{opt}} \in \mathbb{R}^{n} $ be a reference signal with $\|\mathbf{x}_{\mathrm{opt}}\|_{0}=\lceil 0.2n \rceil$, where the non-zero locations are chosen uniformly at random with the values following a zero mean, unit variance Gaussian distribution. Let $\textbf{v}=\mathbf{U}\mathbf{x}_{\mathrm{opt}}+\textbf{n}$ be the allowable measurement, where $\textbf{n} \in \mathbb{R}^{m}$ is a Gaussian random vector with zero mean and co-variance matrix $\sigma^{2}\mathbf{I}_{m\times m}$, where $\mathbf{I}$ is the identity matrix. The sparse vector is reconstructed from $\textbf{v}$ by solving (\ref{motivProb}) with $\mathcal{V}=\{ \mathbf{x}: \|\mathbf{U}\mathbf{x}-\mathbf{v}\|/\|\mathbf{v}\|-\epsilon \leq 0 \}$.
\begin{figure} 
\begin{center}
\includegraphics[scale=0.5]
{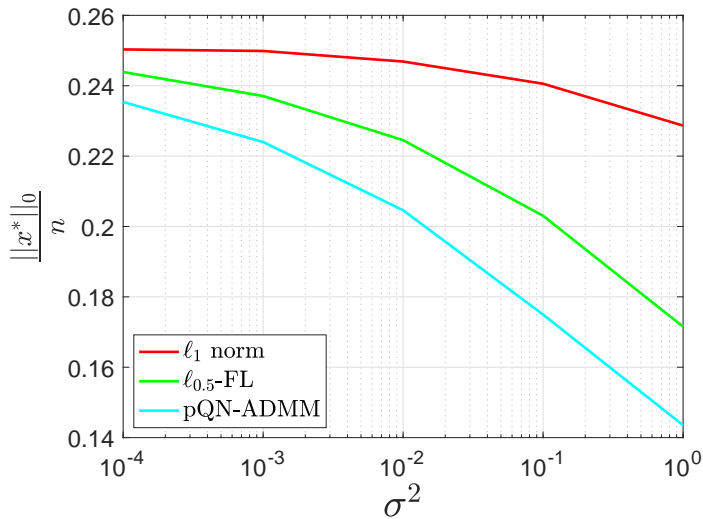}
\caption{Effect of noise variance on the sparsity of solutions obtained by pQN-ADMM algorithm, $\ell_{0.5}$-FL algorithm and $\ell_{1}$ norm minimization.} \label{Fig3}
\end{center}
\end{figure}
Figure \ref{Fig3} plots the relation between the sparsity level and the noise variance for $\ell_{1}$ norm minimization, $\ell_{0.5}$-FL quasi-norm and pQN-ADMM solutions. A threshold value of $10^{-6}$ was used where the threshold is a value below which the entry of the solution vector is considered to be zero. Depending on the noise variance $\sigma^{2}$, the value of $\epsilon$ was chosen to make the problem feasible. The reported result is the average of 100 independent random runs. It can be realized that pQN-ADMM algorithm produces a sparser solution than its counter baselines for different values of $\sigma^{2}$. On increasing $\sigma^{2}$, the sparsity level for all methods decreases. This is due to the increased scarcity of information on the original signal in the realization vector which makes the reconstruction process less accurate. \looseness=-1
\vspace{-0.3cm}
\subsection{Binary Classification} 
In this part, we build an email spam classifier based on support vector machines. We use a subset of the training set used in the SpamAssassin Public Corpus \cite{23}. Let $\{(\mathbf{u}_{j},v_{j})\}_{j \in [m]}$ be the training set of feature vectors $\mathbf{u}_{j} \in \{0,1\}^{n}$ with corresponding labels 
$v_{j} \in \{-1,1\}$ identifying whether the email is spam or not. We highlight the effectiveness of our method in designing an email spam detector using the least number of words. Following \cite{24}, we maintain a dictionary of $n=1899$ words. For a given email $j\in[m]$, the $i$th entry of $\mathbf{u}_{j}$ is 1 if word $\mathrm{w}_{i}, i \in [n]$ of the dictionary is in email $j$, and is 0 otherwise. We aim to build a linear classifier with the decision rule $\hat{v}=\mathrm{sign}(\mathbf{u}^{\top} \mathbf{x})$, where $\mathbf{u}$ is the feature vector of the email in question and $\mathbf{x}$ is a vector of the classifier coefficients with the first entry being the bias term. The main aim is to build a classifier that detects whether an email is a spam or not, using the least number of words from the dictionary and achieving a high training data accuracy. To achieve this objective, we solve (\ref{motivProb}) with $\mathcal{M}=\{\mathbf{x}: \frac{1}{m} \sum_{j \in [m]} \left( 1-v_{j}\mathbf{u}_{j}^{\top}\mathbf{x} \right)^{+}-\epsilon\leq 0$\}.

It can be clearly realized that the training set accuracy is controlled by $\epsilon$. Algorithm \ref{alg:1} was run for $p=0.5$, 2000 training emails and various values for $\epsilon$. For each value of $\epsilon$, the algorithm was terminated after 100 iterations and performance tested on 1000 emails. For comparison purpose, the problem was also solved with the $\ell_{1}$ norm convex relaxation under the same setup. In figure \ref{Fig4}, we plot the number of non-zero entries in the optimal classifier from both the pQN-ADMM and $\ell_{1}$ solutions vs different values of $\epsilon$.

We used a threshold of $10^{-4}$, where the threshold is defined as in section \ref{SSR}. It can be realized from figure \ref{Fig4} that the pQN-ADMM solution outperforms the $\ell_{1}$ in terms of the number of words used for legitimacy detection. When the value of $\epsilon$ increases, the number of required words decreases for both $\ell_{0.5}$ and $\ell_{1}$ problems. This outlines the trade-off between the sparsity level of the classifier and its accuracy, i.e., small values of $\epsilon$ enforces a low classification error in expense of a less sparse solution. The corresponding training and test set accuracies for the obtained classifiers are plotted in Fig. \ref{Fig5}. Both figures \ref{Fig4} and \ref{Fig5} depict the performance of the pQN-ADMM solution from algorithm \ref{alg:1} in terms of the sparsity level while maintaining nearly the same level of accuracy as the $\ell_{1}$ solution for both the training and test sets. 

\begin{figure}[t]
     \centering
     \begin{subfigure}[b]{0.4\textwidth}
         \centering
         \includegraphics[scale=0.3]{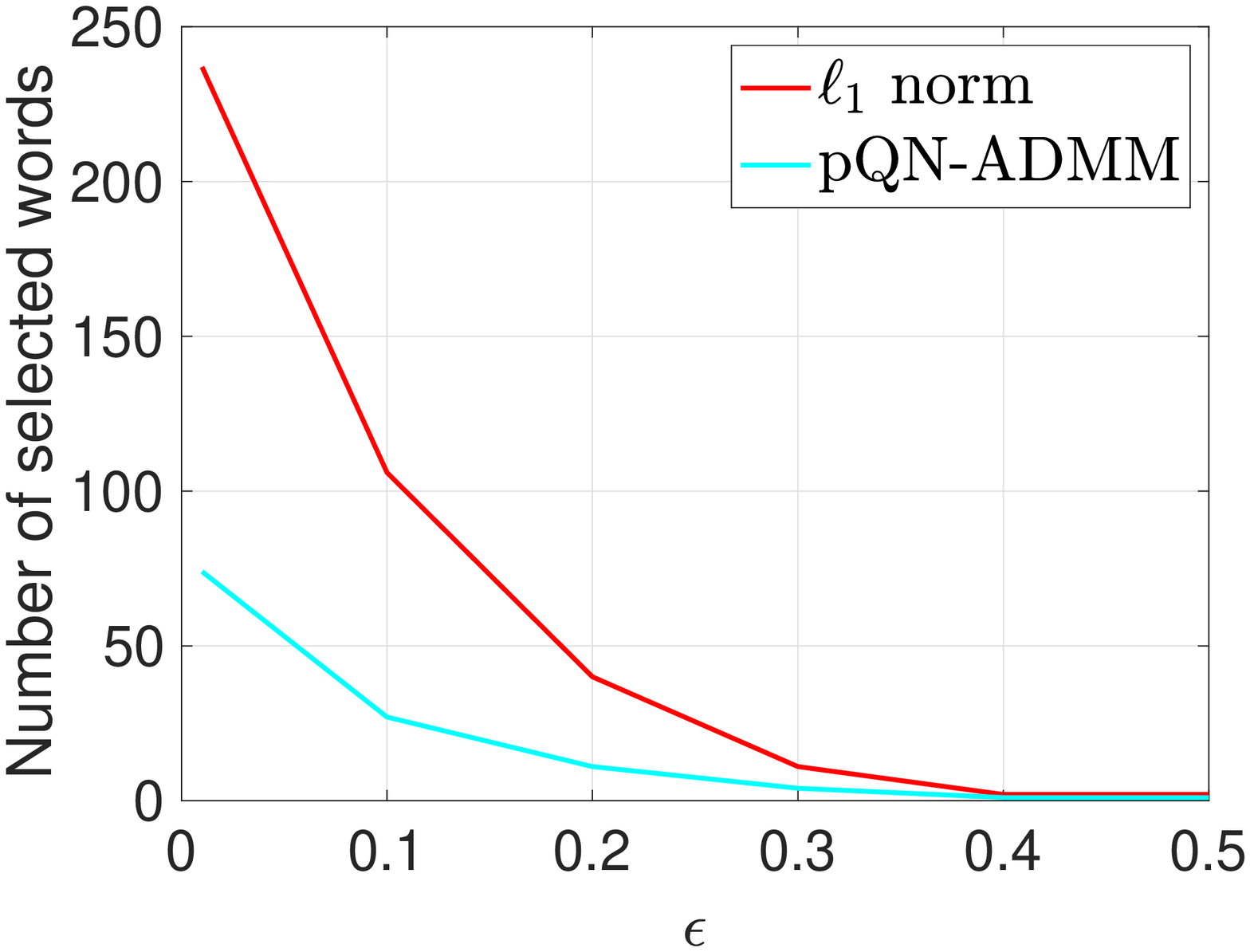}
         \caption{Number of words selected for classification versus $\epsilon$ for pQN-ADMM and $\ell_{1}$ norm.} 
         \label{Fig4}
     \end{subfigure}    
     \begin{subfigure}[b]{0.4\textwidth}
         \centering
         \includegraphics[scale=0.3]{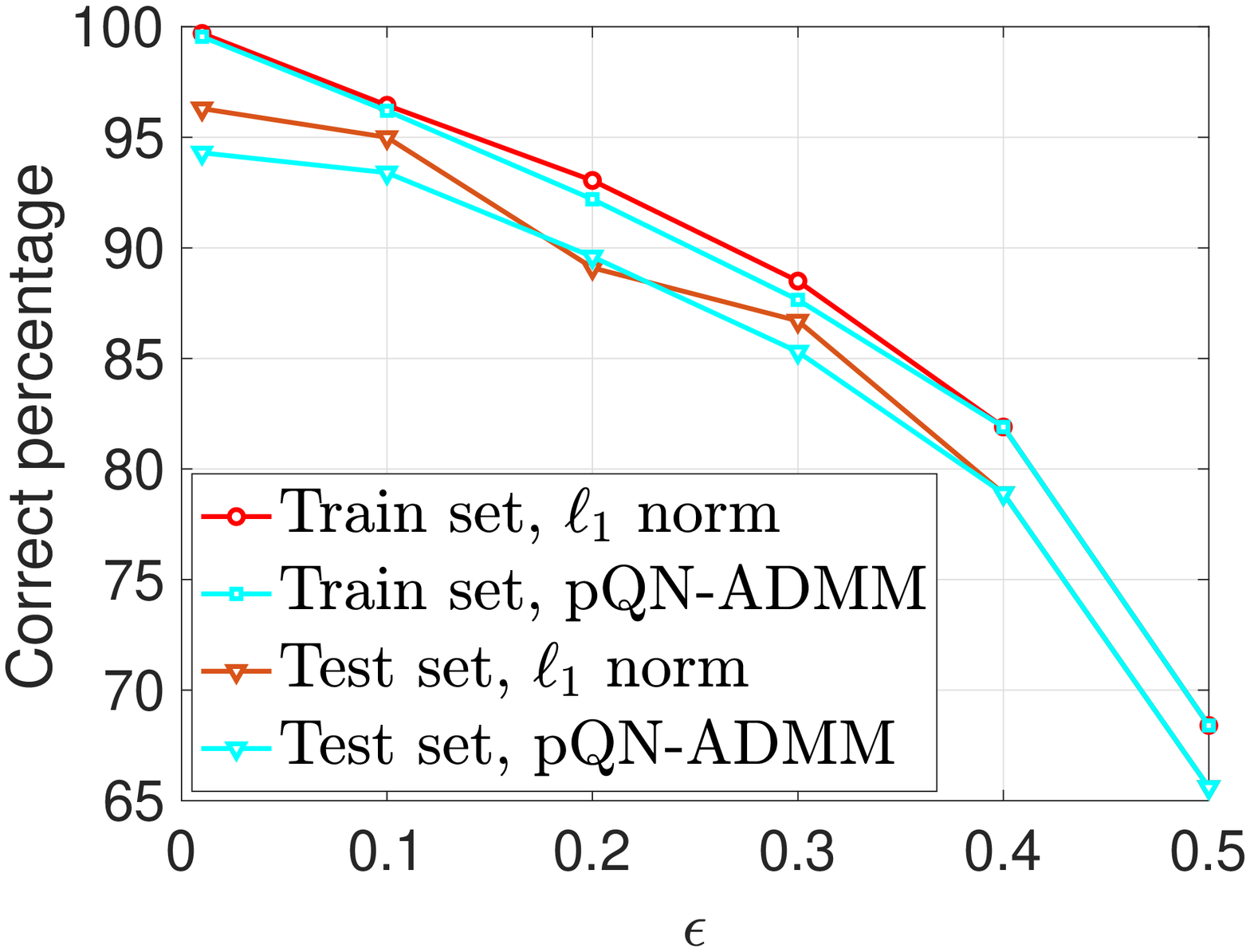}
         \caption{Training and test set accuracies versus $\epsilon$ for pQN-ADMM and $\ell_{1}$ norm.} 
         \label{Fig5}
     \end{subfigure}
     \caption{Binary classification numerical results.}
\end{figure}

\vspace{-0.3cm}
\section{Numerical Results for RMP Problem}
\subsection{Time domain system identification}
In this part, we apply the derived pQN-ADMM approach on a time domain system identification example. In that example, input is applied to randomly generated systems with a known order. Using the outputs corresponding to these systems, the minimum rank/order system is derived and results are compared to nuclear norm heuristic in \cite{4}. 

We consider a discrete time stable Single Input Single Output (SISO) system with an input $\mathbf{u} \in \mathbb{R}^{ \mathrm{T}}$, where $\mathrm{T}$ represents the number of input samples, i.e., input time span. We assume an impulse response of a fixed number of samples $n$. 
The corresponding system output is $\mathbf{y} \in \mathbb{R}^{m}$. However, we assume that only noisy realizations, $\mathbf{\mathbf{\hat{y}}}$, of the output can be considered, such that; $\mathbf{\hat{y}}\stackrel{\Delta}{=}\mathbf{y}+\mathbf{z}=\mathbf{h} \circledast \mathbf{u} +\mathbf{z}$
, where $\mathbf{h} \in \mathbb{R}^{ n}$ is the system's original impulse response, $\mathbf{z} \in \mathbb{R}^{m}$ is a random vector with entries drawn independently from samples of a uniform distribution on the range $[-0.25,0.25]$, i.e., $z_{i} \sim U[-0.25,0.25]$, while $\circledast$ denotes the convolution operator. From the window property of the convolution, $m=n+\mathrm{T}-1$. Assume that $u_{i}$, $h_{i}$ and $y_{i}$ are the $i$th components of the vectors $\mathbf{u}$, $\mathbf{h}$ and $\mathbf{y}$ respectively. The three components are related to each other by convolution through $y_{i}=\sum_{j=-\infty}^{\infty}h_{j}u_{i-j}$ which is a linear relation. Hence, let $\mathcal{T} \in \mathbb{R}^{m\times n}$ be the Toeplitz matrix formed by the input $\mathbf{u}$
, it can be easily seen that $\mathbf{h} \circledast \mathbf{u}=\mathbf{h} \mathcal{T}^\top$. Assume that $\mathbf{x}\in \mathbb{R}^{n}$ is an impulse response variable and let $\mathbf{X} \in \mathbb{R}^{n\times n}$ be a Hankel matrix formed by the entries of $\mathbf{x}$.
From \cite{6,15,16,17}, the minimum order time domain system identification problem can be formulated as,
\begin{subequations}  \label{15}
\begin{align} 
\min_{\mathbf{x},\mathbf{X}} \quad & \textbf{Rank}\mathbf{X}, \\ 
\textrm{s.t.} \quad & \mathbf{X}=Hankel(\mathbf{x}), \label{15-a}\\ 
& \left\lVert \mathbf{\mathbf{\hat{y}}}-\mathbf{x} \mathcal{T}^\top \right\rVert^{2}\leq \epsilon,  \label{15-b}
\end{align} 
\end{subequations}
(\ref{15-a}) ensures that $\mathbf{X}$ is a Hankel matrix and (\ref{15-b}) holds to make the result by applying the input, $\mathbf{u}$, to the optimal impulse response, $\mathbf{x}$, fit the available noisy data, $\mathbf{\hat{y}}$, in a non-trivial sense. Defining the convex set $\mathcal{C}\!\stackrel{\Delta}{=}\!\{\mathbf{X} \!\in \!\mathbb{R}^{n\times n}:\left\lVert \mathbf{\hat{y}}\!-\!h \mathcal{T}^\top \right\rVert^{2}\!\!\!-\!\epsilon \leq 0,  \mathbf{X}\!=\!Hankel(\mathbf{x})\}$, (\ref{15}) can be cast as, 
\begin{equation}  \label{16}
\begin{aligned}
\min_{\mathbf{x},\mathbf{X}} \quad & \rank(\mathbf{X}), \quad
\textrm{s.t.}  & \mathbf{X} \in \mathcal{C}, \\
\end{aligned}
\end{equation}
which is clearly identical to the problem in (\ref{1}). The problem was solved using the same pQN-ADMM approach discussed in section \ref{section 2}.

We let $\mathrm{T}\!=m\!=50$ and $n\!=\!40$. Note that $m<\mathrm{T}+n-1$, which is a reasonable assumption as in some practical applications, one is allowed only a specific window to realize the output. We consider the simulation for 10 different original system orders, i.e., $\eta\!=\!2\!:\!2\!:\!10$. An input vector, $\mathbf{u}$, is generated, where the elements of $\mathbf{u}$ are independent and follow a uniform distribution on the interval $[-5,5]$. For each $\eta$; 1) 50 random stable systems are generated using the command 'drss' in MATLAB. 2) The generated input is applied to each system to get the corresponding noisy output $\mathbf{\hat{y}}$. 3) Given the output $\mathbf{\hat{y}}$, the problem in (\ref{15}) is solved and the corresponding system's rank is calculated using singular value decomposition. 4) The results are averaged out to get the corresponding average rank to each original $\eta$. 

\begin{figure}[t]
\begin{center}
\includegraphics[scale=0.5]
{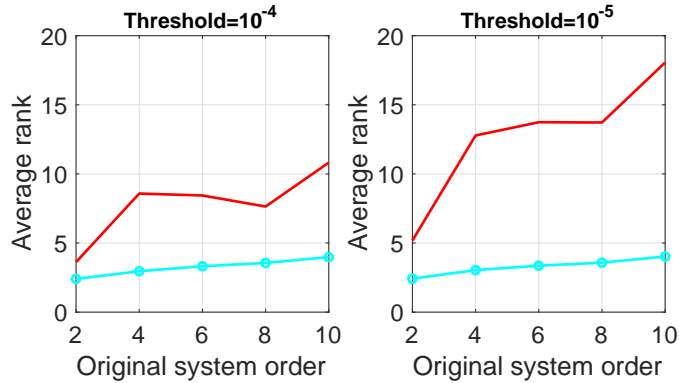}
\caption{Average rank vs original system order. Red and cyan colors are for the nuclear norm and pQN-ADMM algorithm respectively.} \label{fig_time_mean1}
\end{center}
\end{figure}

\begin{table} 
\begin{center}
\hspace*{0.5cm}
\hskip-0.25cm
\begin{tabular}{|p{3cm}|p{1cm}|p{1cm}|p{1cm}|}
 \cline{2-4}
 \multicolumn{1}{c}{} & \multicolumn{1}{|c|}{$\eta$=2} &\multicolumn{1}{c}{$\eta$=6}&\multicolumn{1}{|c|}{$\eta$=10}\\
 \hline
 \textbf{Nuclear norm}     &2.3907    & 6.6668& 7.2572\\
 \hline
 \textbf{pQN-ADMM}      &0.5292    & 0.9042& 1.0861\\
 \hline
\end{tabular}
\vspace{0.25cm}
\caption{\label{table1}Standard deviation for threshold=$10^{-4}$}
\end{center}
\end{table}
\begin{table} 
\begin{center}
\hspace*{0.5cm}
\hskip-0.25cm
\begin{tabular}{|p{3cm}||p{1cm}|p{1cm}|p{1cm}|}
 \cline{2-4}
 \multicolumn{1}{c|}{}  &\multicolumn{1}{|c|}{$\eta$=2}&\multicolumn{1}{|c|}{$\eta$=6}&\multicolumn{1}{|c|}{$\eta$=10}\\
 \hline
 \textbf{Nuclear norm}      &6.9877    & 11.2638&11.7854\\
 \hline
 \textbf{pQN-ADMM}      &0.5325     & 0.9113& 1.0861\\
 \hline
\end{tabular}
\vspace{0.25cm}
\caption{\label{table2}Standard deviation for threshold=$10^{-5}$}
\end{center}
\end{table}
Figure \ref{fig_time_mean1} shows the average rank for the the nuclear norm and pQN-ADMM heuristics. The results are for two different values of thresholds, where the threshold is defined as the value below which the singular value is considered to be zero. It can be realized that the introduced pQN-ADMM approach outperforms the nuclear norm one for both values of thresholds. Moreover, when the threshold value decreases from $10^{-4}$ to $10^{-5}$, the behavior of the pQN-ADMM remains the same. However, the average rank for the nuclear norm increases. This proves the robustness of the derived pQN-ADMM in comparison to the nuclear norm one. 
Tables \ref{table1} and \ref{table2} show the standard deviation of the algorithms. It can be seen that the standard deviation is the same for the pQN-ADMM when changing the threshold, however, it increases for the nuclear norm as the threshold value decreases.

\subsection{Matrix Completion Example}
In this section, we apply our algorithm (pQN-ADMM) to a matrix completion example and compare the result to the matrix iterative re-weighted least squares (MatrixIRLS) \cite{matrixIRLS1,matrixIRLS2}, truncated iterative re-weighted unconstrained Lq (tIRucLq) \cite{ref8} and iterative re-weighted least squares (sIRLS-p $\&$ IRLS-p) \cite{SIRLS-IRLS} algorithms. The matrix completion problem is a special case of the low rank minimization where a linear transform takes a few random entries of an ambiguous matrix $\mathbf{X}\!\in\! \mathbb{R}^{m\times n}$. Given only these entries, the goal is to approximate $\mathbf{X}$ and find the missing ones. The matrix completion problem with low rank recovery can be approximated by,
\begin{equation}  \label{comp_prob}
\begin{aligned}
\min_{\mathbf{X}} \quad & \left\lVert \mathbf{X} \right\rVert_{p}^{p}, \quad
\textrm{s.t.}  &  \left\lVert \mathcal{A}(\mathbf{X})-b \right\rVert\leq \epsilon , \\
\end{aligned}
\end{equation}
where $\mathcal{A}:\mathbb{R}^{m\times n} \to \mathbb{R}^{q}$ is a linear map with $q \ll mn$ and $b \in \mathbb{R}^{q}$. In order to apply the mentioned algorithms, the linear transform $\mathcal{A}(\mathbf{X})$ will be rewritten as $\mathrm{A} \mathrm{vec}(\mathbf{X})$, where $\mathrm{A} \in \mathbb{R}^{q\times mn}$ and $\mathrm{vec}(\mathbf{X}) \in \mathbb{R}^{mn}$ is a vector formed by stacking the columns of the matrix $\mathbf{X}$. 

A random matrix $\mathbf{M} \in \mathbb{R}^{m\times n}$ with rank $\mathrm{r}$ is created using the following method: 1) $\mathbf{M}=\mathbf{M}_{\mathrm{L}}\mathbf{M}_{\mathrm{R}}^{\top}$, where $\mathbf{M}_{\mathrm{L}} \in \mathbb{R}^{m\times \mathrm{r}}$ and $\mathbf{M}_{\mathrm{R}} \in \mathbb{R}^{n\times \mathrm{r}}$. 2) The entries of both $\mathbf{M}_{\mathrm{L}}$ and $\mathbf{M}_{\mathrm{R}}$ are i.i.d Gaussian random variables with zero mean and unit variance. 
Let $\hat{\mathbf{M}}=\mathbf{M}+\mathbf{Z}$, where $\mathbf{Z} \in \mathbb{R}^{m\times n}$ is a Gaussian noise with each entry being an i.i.d Gaussian random variable with zero mean and variance $\sigma^{2}$. The vector $b$ is then created by selecting random $q$ elements from $\mathrm{vec}(\hat{\mathbf{M}})$. Since $b=\mathrm{A}\mathrm{vec}(\hat{\mathbf{M}})$, one can easily construct the matrix $\mathrm{A}$ which is a sparse matrix where each row is composed of a value 1 at the index of the corresponding selected entry in the vector $b$ while the rest are zeros. We set $m=n=100$, $\mathrm{r}=5$ and $p=0.5$. Let $\mathrm{d}_{\mathrm{r}}=\mathrm{r}(m+n-\mathrm{r})$ denotes the dimension of the set of rank $\mathrm{r}$ matrices and define $\mathrm{s}=\frac{q}{mn}$ as the sampling ratio. We assume that $\mathrm{s}=0.195$ which yields to $q=1950$. It can be realized that $\frac{\mathrm{d}_{\mathrm{r}}}{q}<1$. 
We set $\sigma=0.1$ and let the algorithms terminate if a budget of 1000 iterations is 
reached. In order to compare the results from different algorithms, we consider the average of 50 runs for two measures: a) the relative Frobenius distance (RFD) to the matrix $\mathbf{M}$, b) the relative error to singular (REtS) values of $\mathbf{M}$. 

In figures \ref{fig:matrix_com_noisy} and \ref{fig:matrix_sings_noisy}, we report the average RFD and REtS values for all the algorithms. Despite that all the baselines are designed to exploit the specific structure of the matrix completion problem, described in \eqref{comp_prob}, while the proposed pQN-ADMM doesn't, it is competitive against them all. This in turns shows the effectiveness of the pQN-ADMM algorithm in solving the rank minimization problems without requiring any prior information about the structure of the associated convex set.



\begin{figure}[t]
     \centering
     \begin{subfigure}[b]{0.4\textwidth}
         \centering
         \includegraphics[scale=0.3]{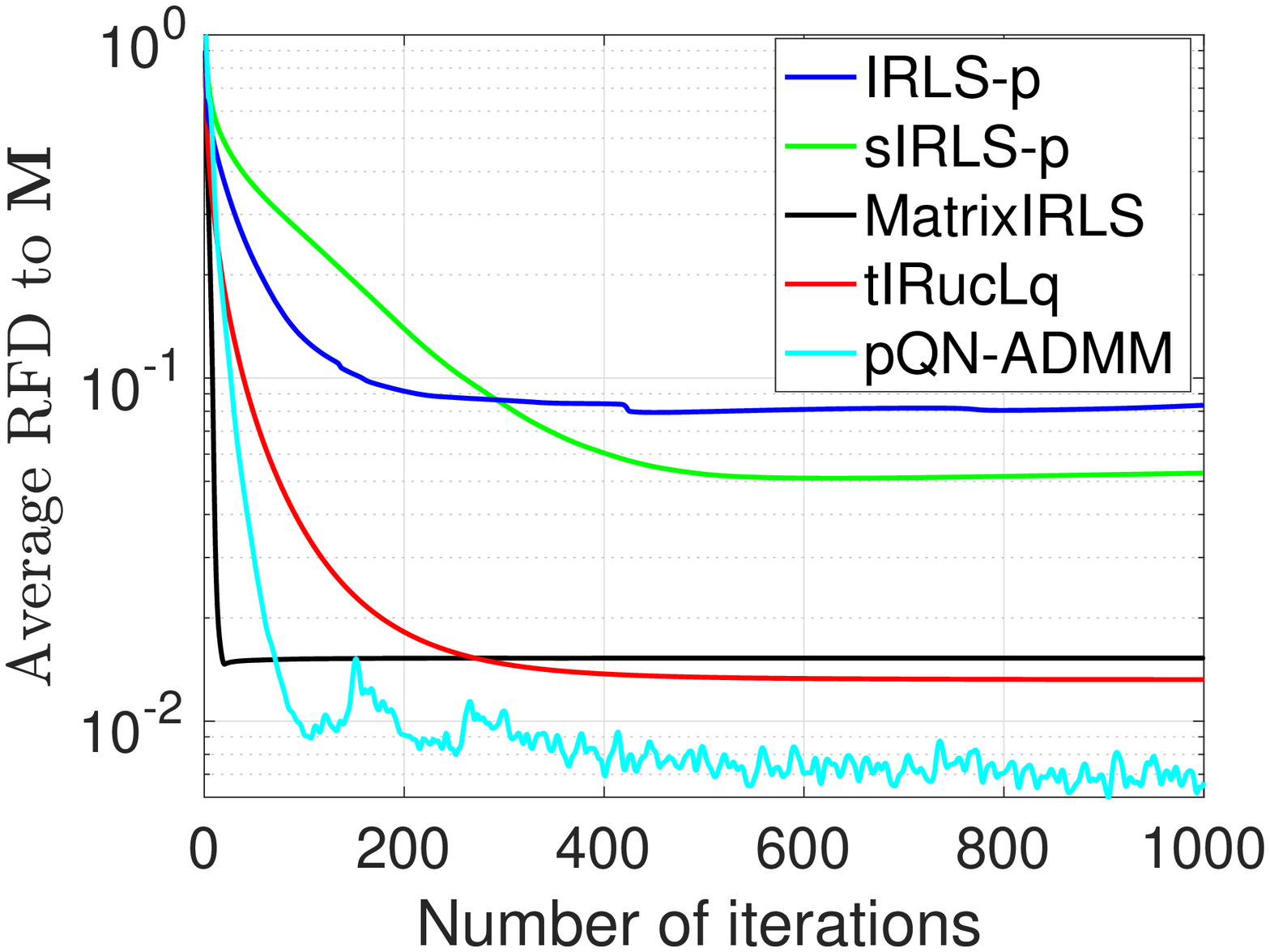}
         \caption{RFD to $\mathbf{M}$.} \label{fig:matrix_com_noisy}
     \end{subfigure}
     \begin{subfigure}[b]{0.4\textwidth}
         \centering
         \includegraphics[scale=0.3]{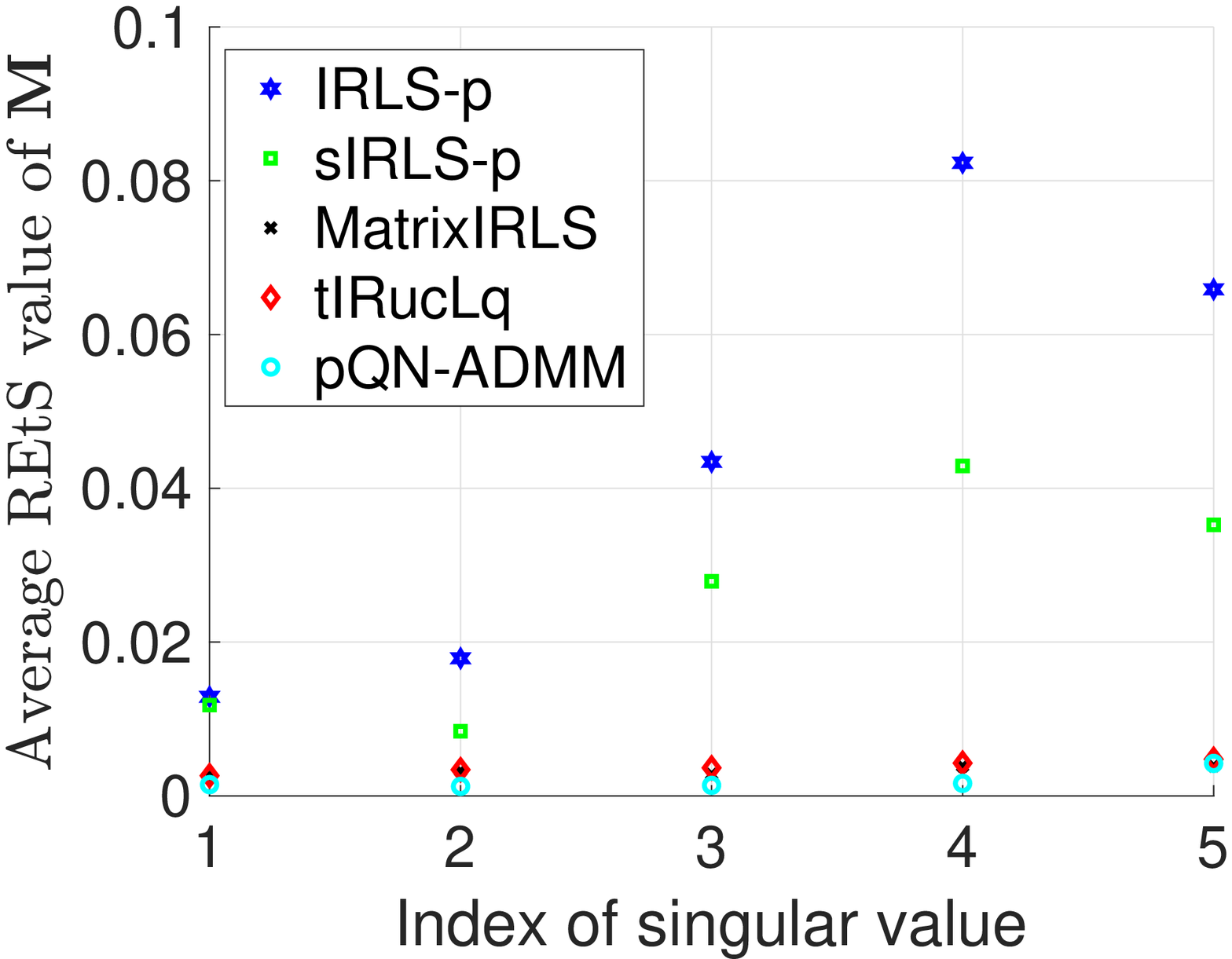}
         \caption{REtS values of $\mathbf{M}$.} 
         \label{fig:matrix_sings_noisy}
     \end{subfigure}
     \caption{The RFD and REtS average values.}
\end{figure}



\section{Numerical Results for the nonconvex Accelerated Proximal Gradient~(APG) Algorithm} 
In this subsection, we present numerical results for the APG method, 
displayed in Algorithm~\ref{PG_alg}. 
Following the same procedure in \cite{aybat2012first}, we first generate the target signal $\mathbf{x}^{*}$ through
\begin{equation} \label{target_sig_gen}
    x_{i}^{*}=
    \left\{\begin{array}{ll}
        \Theta_{i}^{(1)}10^{3\Theta_{i}^{(2)}}, &\forall~i\in\Lambda,\\
        0,  &\forall~i\in[n]\setminus\Lambda;
    \end{array}\right.
\end{equation}%
where the design parameters $\Lambda\subset [n]$, and $\Theta_i^{(1)}, \Theta_i^{(2)}$ for $i\in\Lambda$ are chosen as follows:
\begin{enumerate}
    \item the index set $\Lambda\subset[n]$ is constructed by selecting a subset of $[n]$ with cardinality 
    $s$ uniformly at random;
    \item $\{\Theta_{i}^{(1)}\}_{i\in\Lambda}$ are independent, identically distributed~(IID) Bernoulli random variables taking values $\pm 1$ with equal probability;
    \item $\{\Theta_{i}^{(2)}\}_{i\in\Lambda}$ are IID uniform $[0,1]$ random variables.
\end{enumerate}

The measurement matrix $A \in \mathbb{R}^{m\times n}$ is a partial Discrete Cosine Transform (DCT) matrix with rows corresponding to $m<n$ frequency, where these $m$ indices are chosen uniformly at random from $[n]$.
The noisy measurement vector $b \in \mathbb{R}^{m}$ is then set to be $b=A(\mathbf{x}^{*}+\epsilon_{1})+\epsilon_{2}$, where $\epsilon_{1}\sim \mathcal{N}(0,\sigma_1^2)$ and $\epsilon_{2}\sim \mathcal{N}(0,\sigma_2^2)$ are the input and realization noises. 

In our experiments, $n=4096$, $s=\ceil*{0.5m}$ and the PG algorithm memory to 5, i.e., $l=5$. Following the medium noise setup in \cite{hale2007fixed},  we set $\sigma_1=0.005$, $\sigma_2=0.001$.

For $f(\mathbf{x})=\left\lVert A\mathbf{x}-b\right\rVert^{2}$, we have $L=2\Vert A\Vert^2$.
We perform our experiment for various values of $m$, i.e., number of noisy measurements, and $\mu$, i.e., trade-off parameter, see~\eqref{PG}. For each $(m,\mu)$ selection, in order to capture the inherent statistical variation of the problem, we generate 20 random instances of the triplet $(\mathbf{x}^*, A, b)$ and each random instance is solved by Algorithm~\ref{PG_alg}. We reported the 
average performance. We terminated Algorithm~\ref{PG_alg} when the relative error between consecutive iterates satisfies $\left\lVert \mathbf{x}^{k}-\mathbf{x}^{k-1}\right\rVert/\left\lVert \mathbf{x}^{k-1}\right\rVert\leq 10^{-5}$ for the first time.

In our experiments, we compared solving \eqref{PG} for $p=0.5$ against $p=1$, i.e., against $\ell_1$-optimization for sparse recovery. On one hand, when $p=0.5$, i.e., for $\ell_{0.5}$ minimization, we solve \eqref{PG} using Algorithm~\ref{PG_alg}, called $\ell_{0.5}$ exact, and using the algorithm 2 of \cite{comp_paper_pg}, which we call $\ell_{0.5}$ approx. On the other hand, when $p=1$, $\ell_1$-minimization problem is a convex one and we adopt the FISTA algorithm of \cite{beck2009fast}. The solution is denoted by $\bar{\mathbf{x}}$ while the target signal, from \eqref{target_sig_gen}, by $\mathbf{x}^{*}$. In Algorithm~\ref{PG_alg}, $\mathbf{x}^{0}$ is set to a zero vector while $\mathbf{x}^{1}$ is the $\ell_{1}$ norm solution. 

Figures \ref{PG_Fig1} and \ref{PG_Fig3} highlight the relation between the average error and sparsity vs $\mu$ for different values of $n/m$. It can be realized that the average error (sparsity) decreases (increases) on increasing $\mu$. For small values of $\mu$, more weight is given to the loss function, which emphasizes the $\ell_{0}$ quasi-norm minimization, and hence the sparsity level, as in figure \ref{PG_Fig3}, is low. However, for high values of $\mu$, more weight is assigned to the minimization of the regularization term, which solves $\left\lVert A\mathbf{x}-b\right\rVert^{2}$, and hence the error decreases, as shown in figure \ref{PG_alg}, with a corresponding increase in the sparsity. It can be realized that the $\ell_{0.5}$ solutions always outperforms the $\ell_{1}$ one with very slight difference between the exact and the approximate ones.
\begin{figure}[t]
\begin{center}
\includegraphics[width=\columnwidth ]
{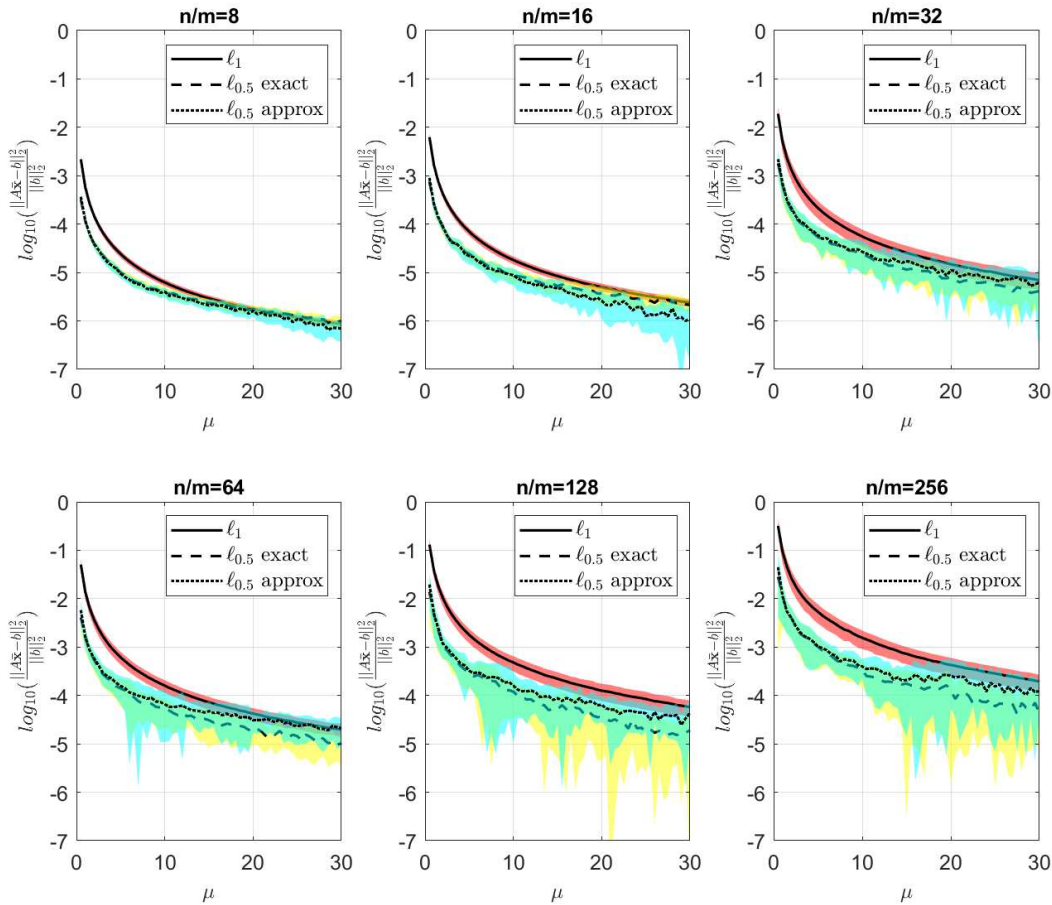}
\caption{Average error vs $\mu$ for different values of $n/m$. Yellow and cyan shades are the standard deviations for the exact and approximate $\ell_{0.5}$ quasi-norms respectively. } \label{PG_Fig1}
\end{center}
\vspace{-5mm}
\end{figure}
\begin{figure}[t]
\begin{center}
\includegraphics[width=\columnwidth]
{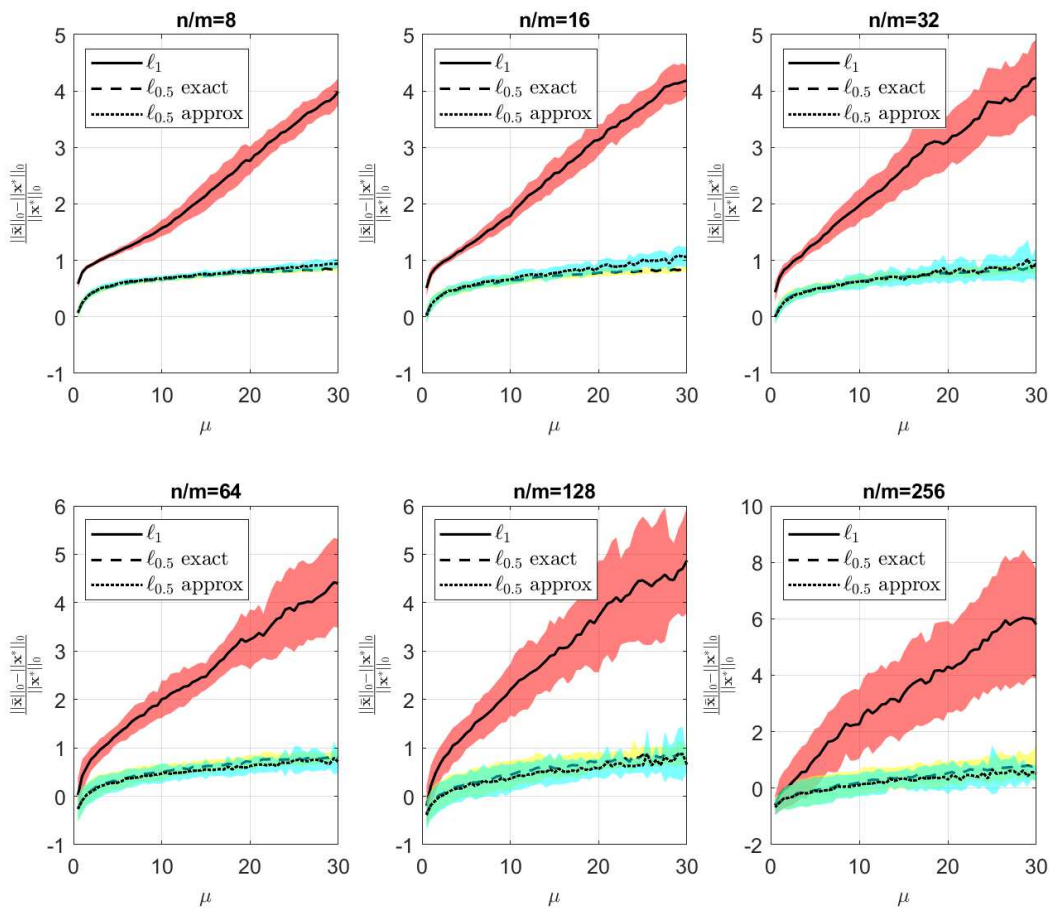}
\caption{Sparsity vs $\mu$ for different values of $n/m$. Yellow and cyan shades are the standard deviations for the exact and approximate $\ell_{0.5}$ quasi-norms respectively.} \label{PG_Fig3}
\end{center}
\vspace{-5mm}
\end{figure}

Figure \ref{PG_Fig4} highlights the statistics of the number of iterations used until convergence for both the $\ell_{0.5}$ exact and approximate algorithms. It can be realized that with a sufficient number of available realizations, $n/m=8$ and $n/m=16$, both algorithms approximately consume the same number of iterations. However, when the number of available realizations decreases, $n/m=32$ and higher, our exact proximal solution requires significantly less number of iterations to converge. This conclusion, along with figures \ref{PG_Fig1} and \ref{PG_Fig3} findings, indicates that our algorithm not only finds a similar solution to the approximate method, but also converges with a fewer number of iterations.
\begin{figure}[t]
\begin{center}
\includegraphics[width=\columnwidth]
{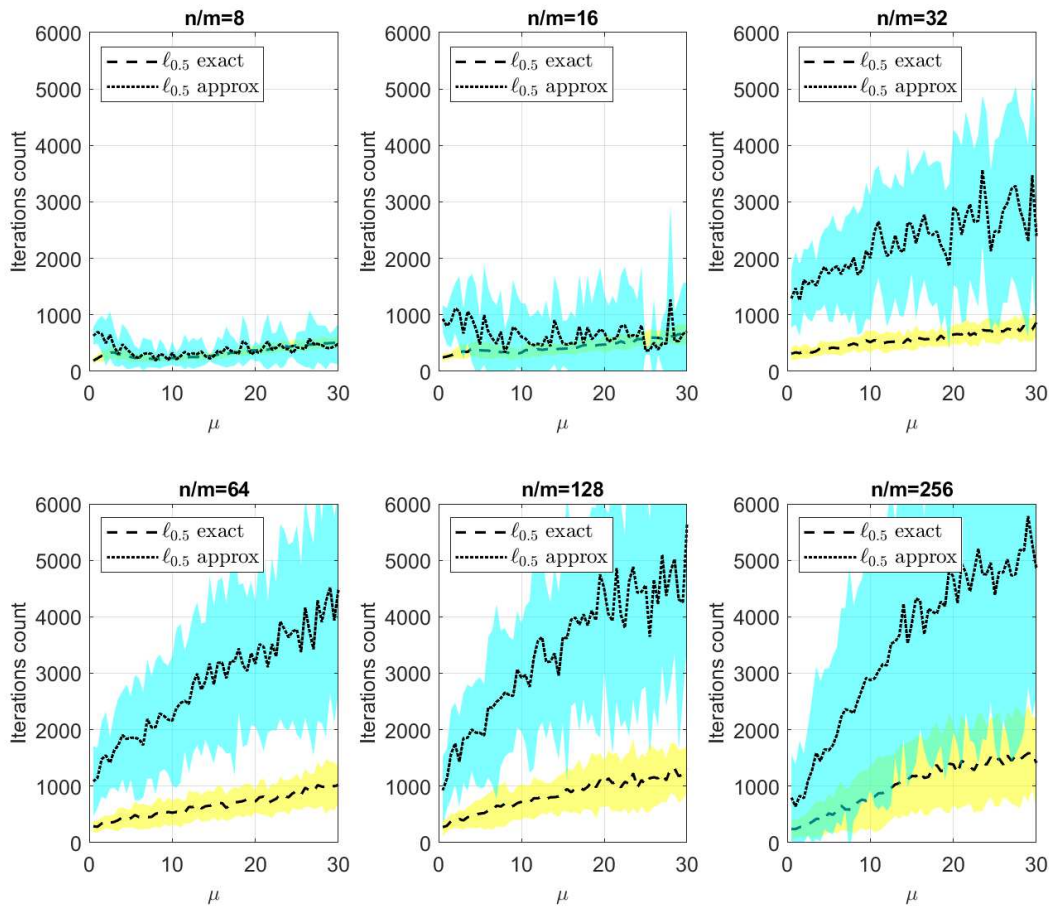}
\caption{Iterations count vs $\mu$ for different values of $n/m$.} \label{PG_Fig4}
\end{center}
\vspace{-5mm}
\end{figure}
\section{Conclusion}
In this study, we presented a non-convex ADMM algorithm (pQN-ADMM) to solve the $\ell_{p}$ norm minimization problem. The algorithm has a similar complexity to that of the $\ell_{1}$ minimization in addition to solving the roots of a polynomial for the non-convex projection. Our algorithm can also be considered as a general procedure for solving $\ell_{p}$ problems as no specific structure for the convex constraint set was assumed and a convex projection on that set was done for variables update. Applying sparse signal recover and binary classification examples, our method was found to outperform the $\ell_{1}$ minimization in terms of the sparsity of the generated solution. In addition, we studied the problem of solving a non-convex relaxation of RMPs using Schatten-p quasi-norm. This relaxation was shown to be the $\ell_{p}$ minimization of the singular values of the variable matrix and hence the primary developed algorithm could be used. Showing the numerical results, the pQN-ADMM was found to be less sensitive to the threshold decrease in time domain system identification problems. Additionally, the pQN-ADMM method was shown to be competitive against various other baselines when solving the matrix completion problem. 

\bibliography{Ref}

\begin{thebibliography}{10}

\bibitem{2}
L.~Vandenberghe and S.~Boyd, ``Semidefinite programming,'' {\em SIAM Review},
  vol.~38, no.~1, pp.~49--95, 1996.

\bibitem{vec1}
E.~J. {Candes} and T.~{Tao}, ``Decoding by linear programming,'' {\em IEEE
  Transactions on Information Theory}, vol.~51, no.~12, pp.~4203--4215, 2005.

\bibitem{vec2}
J.~{Wright}, A.~Y. {Yang}, A.~{Ganesh}, S.~S. {Sastry}, and Y.~{Ma}, ``Robust
  face recognition via sparse representation,'' {\em IEEE Transactions on
  Pattern Analysis and Machine Intelligence}, vol.~31, no.~2, pp.~210--227,
  2009.

\bibitem{vec3}
E.~J. {Candes}, J.~{Romberg}, and T.~{Tao}, ``Robust uncertainty principles:
  exact signal reconstruction from highly incomplete frequency information,''
  {\em IEEE Transactions on Information Theory}, vol.~52, no.~2, pp.~489--509,
  2006.

\bibitem{vec4}
D.~L. {Donoho}, ``Compressed sensing,'' {\em IEEE Transactions on Information
  Theory}, vol.~52, no.~4, pp.~1289--1306, 2006.

\bibitem{vec5}
A.~M. Bruckstein, D.~L. Donoho, and M.~Elad, ``From sparse solutions of systems
  of equations to sparse modeling of signals and images,'' {\em SIAM review},
  vol.~51, no.~1, pp.~34--81, 2009.

\bibitem{vec6}
J.~A. {Tropp} and S.~J. {Wright}, ``Computational methods for sparse solution
  of linear inverse problems,'' {\em Proceedings of the IEEE}, vol.~98, no.~6,
  pp.~948--958, 2010.

\bibitem{vec7}
S.~G. {Mallat} and {Zhifeng Zhang}, ``Matching pursuits with time-frequency
  dictionaries,'' {\em IEEE Transactions on Signal Processing}, vol.~41,
  no.~12, pp.~3397--3415, 1993.

\bibitem{vec13}
J.~A. {Tropp}, ``Greed is good: algorithmic results for sparse approximation,''
  {\em IEEE Transactions on Information Theory}, vol.~50, no.~10,
  pp.~2231--2242, 2004.

\bibitem{vec8}
S.~S. Chen, D.~L. Donoho, and M.~A. Saunders, ``Atomic decomposition by basis
  pursuit,'' {\em SIAM review}, vol.~43, no.~1, pp.~129--159, 2001.

\bibitem{vec14}
R.~Tibshirani, ``Regression shrinkage and selection via the lasso,'' {\em
  Journal of the Royal Statistical Society: Series B (Methodological)},
  vol.~58, no.~1, pp.~267--288, 1996.

\bibitem{vec9}
R.~{Chartrand}, ``Exact reconstruction of sparse signals via nonconvex
  minimization,'' {\em IEEE Signal Processing Letters}, vol.~14, no.~10,
  pp.~707--710, 2007.

\bibitem{vec15}
R.~Chartrand and W.~Yin, ``Iteratively reweighted algorithms for compressive
  sensing,'' in {\em 2008 IEEE International Conference on Acoustics, Speech
  and Signal Processing}, pp.~3869--3872, IEEE, 2008.

\bibitem{vec10}
D.~P. {Wipf} and B.~D. {Rao}, ``Sparse bayesian learning for basis selection,''
  {\em IEEE Transactions on Signal Processing}, vol.~52, no.~8, pp.~2153--2164,
  2004.

\bibitem{vec11}
P.~{Schniter}, L.~C. {Potter}, and J.~{Ziniel}, ``Fast bayesian matching
  pursuit,'' in {\em 2008 Information Theory and Applications Workshop},
  pp.~326--333, 2008.

\bibitem{vec12}
A.~Miller, {\em Subset selection in regression}.
\newblock CRC Press, 2002.

\bibitem{vec16}
R.~{Saab}, R.~{Chartrand}, and O.~{Yilmaz}, ``Stable sparse approximations via
  nonconvex optimization,'' in {\em 2008 IEEE International Conference on
  Acoustics, Speech and Signal Processing}, pp.~3885--3888, 2008.

\bibitem{vec17}
R.~{Chartrand}, ``Fast algorithms for nonconvex compressive sensing: {MRI}
  reconstruction from very few data,'' in {\em 2009 IEEE International
  Symposium on Biomedical Imaging: From Nano to Macro}, pp.~262--265, 2009.

\bibitem{vec18}
N.~{Mourad} and J.~P. {Reilly}, ``Minimizing nonconvex functions for sparse
  vector reconstruction,'' {\em IEEE Transactions on Signal Processing},
  vol.~58, no.~7, pp.~3485--3496, 2010.

\bibitem{ref1}
R.~Chartrand and B.~Wohlberg, ``A nonconvex {ADMM} algorithm for group sparsity
  with sparse groups,'' in {\em 2013 IEEE International Conference on
  Acoustics, Speech and Signal Processing}, pp.~6009--6013, 2013.

\bibitem{ref6}
Z.~Xu, X.~Chang, F.~Xu, and H.~Zhang, ``$l_{1/2}$ regularization: A
  thresholding representation theory and a fast solver,'' {\em IEEE
  Transactions on Neural Networks and Learning Systems}, vol.~23, no.~7,
  pp.~1013--1027, 2012.

\bibitem{ref7}
J.~Zeng, S.~Lin, Y.~Wang, and Z.~Xu, ``$l_{1/2}$ regularization: Convergence of
  iterative half thresholding algorithm,'' {\em IEEE Transactions on Signal
  Processing}, vol.~62, no.~9, pp.~2317--2329, 2014.

\bibitem{ref4}
G.~Li and T.~K. Pong, ``Global convergence of splitting methods for nonconvex
  composite optimization,'' {\em SIAM Journal on Optimization}, vol.~25, no.~4,
  pp.~2434--2460, 2015.

\bibitem{ref3}
Y.~Wang, W.~Yin, and J.~Zeng, ``Global convergence of {ADMM} in nonconvex
  nonsmooth optimization,'' {\em Journal of Scientific Computing}, vol.~78,
  no.~1, pp.~29--63, 2019.

\bibitem{3}
M.~Mesbahi, ``On the semi-definite programming solution on the least order
  dynamic output feedback synthesis,'' 1999.

\bibitem{4}
M.~{Fazel}, H.~{Hindi}, and S.~P. {Boyd}, ``A rank minimization heuristic with
  application to minimum order system approximation,'' in {\em Proceedings of
  the 2001 American Control Conference. (Cat. No.01CH37148)}, vol.~6,
  pp.~4734--4739 vol.6, 2001.

\bibitem{5}
M.~{Fazel}, H.~{Hindi}, and S.~P. {Boyd}, ``Log-det heuristic for matrix rank
  minimization with applications to hankel and euclidean distance matrices,''
  in {\em Proceedings of the 2003 American Control Conference, 2003.}, vol.~3,
  pp.~2156--2162 vol.3, 2003.

\bibitem{1}
M.~{Fazel}, H.~{Hindi}, and S.~{Boyd}, ``Rank minimization and applications in
  system theory,'' in {\em Proceedings of the 2004 American Control
  Conference}, vol.~4, pp.~3273--3278 vol.4, 2004.

\bibitem{6}
K.~{Mohan} and M.~{Fazel}, ``Reweighted nuclear norm minimization with
  application to system identification,'' in {\em Proceedings of the 2010
  American Control Conference}, pp.~2953--2959, 2010.

\bibitem{7}
S.~{Gu}, L.~{Zhang}, W.~{Zuo}, and X.~{Feng}, ``Weighted nuclear norm
  minimization with application to image denoising,'' in {\em 2014 IEEE
  Conference on Computer Vision and Pattern Recognition}, pp.~2862--2869, 2014.

\bibitem{8}
F.~Nie, H.~Huang, and C.~Ding, ``Low-rank matrix recovery via efficient
  {S}chatten p-norm minimization,'' in {\em Twenty-sixth AAAI conference on
  artificial intelligence}, 2012.

\bibitem{10}
F.~Nie, H.~Wang, H.~Huang, and C.~Ding, ``Joint schatten $p$-norm and $l_{p}$
  norm robust matrix completion for missing value recovery,'' {\em Knowledge
  and Information Systems}, vol.~42, no.~3, pp.~525--544, 2015.

\bibitem{9}
L.~Liu, W.~Huang, and D.-R. Chen, ``Exact minimum rank approximation via
  {S}chatten p-norm minimization,'' {\em Journal of Computational and Applied
  Mathematics}, vol.~267, pp.~218--227, 2014.

\bibitem{ref8}
M.-J. Lai, Y.~Xu, and W.~Yin, ``Improved iteratively reweighted least squares
  for unconstrained smoothed $\ell{q}$ minimization,'' {\em SIAM Journal on
  Numerical Analysis}, vol.~51, no.~2, pp.~927--957, 2013.

\bibitem{ref2}
R.~Chartrand, ``Nonconvex splitting for regularized low-rank + sparse
  decomposition,'' {\em IEEE Transactions on Signal Processing}, vol.~60,
  no.~11, pp.~5810--5819, 2012.

\bibitem{Proj1}
M.~D. Gupta and S.~Kumar, ``Non-convex p-norm projection for robust sparsity,''
  in {\em 2013 IEEE International Conference on Computer Vision},
  pp.~1593--1600, 2013.

\bibitem{Proj2}
S.~Bahmani and B.~Raj, ``A unifying analysis of projected gradient descent for
  $\ell_{p}$-constrained least squares,'' {\em Applied and Computational
  Harmonic Analysis}, vol.~34, no.~3, pp.~366--378, 2013.

\bibitem{14}
M.~E. {Ashour}, C.~M. {Lagoa}, and N.~S. {Aybat}, ``Lp quasi-norm
  minimization,'' in {\em 2019 53rd Asilomar Conference on Signals, Systems,
  and Computers}, pp.~726--730, 2019.

\bibitem{11}
S.~Boyd, N.~Parikh, and E.~Chu, {\em Distributed optimization and statistical
  learning via the alternating direction method of multipliers}.
\newblock Now Publishers Inc, 2011.

\bibitem{18}
C.~Helmberg and F.~Rendl, ``A spectral bundle method for semidefinite
  programming,'' {\em SIAM Journal on Optimization}, vol.~10, no.~3,
  pp.~673--696, 2000.

\bibitem{19}
A.~Beck and M.~Teboulle, ``Mirror descent and nonlinear projected subgradient
  methods for convex optimization,'' {\em Operations Research Letters},
  vol.~31, no.~3, pp.~167--175, 2003.

\bibitem{20}
Y.~Nesterov and A.~Nemirovskii, {\em Interior-point polynomial algorithms in
  convex programming}.
\newblock SIAM, 1994.

\bibitem{21}
A.~Ben-Tal and A.~Nemirovski, {\em Lectures on modern convex optimization:
  analysis, algorithms, and engineering applications}.
\newblock SIAM, 2001.

\bibitem{ref5}
Y.~Xu and W.~Yin, ``A globally convergent algorithm for nonconvex optimization
  based on block coordinate update,'' {\em Journal of Scientific Computing},
  vol.~72, no.~2, pp.~700--734, 2017.

\bibitem{12}
Z.~Zha, X.~Zhang, Y.~Wu, Q.~Wang, X.~Liu, L.~Tang, and X.~Yuan, ``Non-convex
  weighted $l_{p}$ nuclear norm based {ADMM} framework for image restoration,''
  {\em Neurocomputing}, vol.~311, pp.~209--224, 2018.

\bibitem{13}
L.~Mirsky, ``A trace inequality of {J}ohn von {N}eumann,'' {\em Monatshefte
  f{\"u}r mathematik}, vol.~79, no.~4, pp.~303--306, 1975.

\bibitem{PG1}
N.~Parikh and S.~Boyd, ``Proximal algorithms,'' {\em Foundations and Trends in
  optimization}, vol.~1, no.~3, pp.~127--239, 2014.

\bibitem{APG_method}
Q.~Yao, J.~T. Kwok, F.~Gao, W.~Chen, and T.-Y. Liu, ``Efficient inexact
  proximal gradient algorithm for nonconvex problems,'' {\em arXiv preprint
  arXiv:1612.09069}, 2016.

\bibitem{APG}
A.~Beck and M.~Teboulle, ``A fast iterative shrinkage-thresholding algorithm
  for linear inverse problems,'' {\em SIAM journal on imaging sciences},
  vol.~2, no.~1, pp.~183--202, 2009.

\bibitem{attouch2013convergence}
H.~Attouch, J.~Bolte, and B.~F. Svaiter, ``Convergence of descent methods for
  semi-algebraic and tame problems: proximal algorithms, forward--backward
  splitting, and regularized {G}auss--{S}eidel methods,'' {\em Mathematical
  Programming}, vol.~137, no.~1, pp.~91--129, 2013.

\bibitem{attouch2010proximal}
H.~Attouch, J.~Bolte, P.~Redont, and A.~Soubeyran, ``Proximal alternating
  minimization and projection methods for nonconvex problems: An approach based
  on the {K}urdyka-{L}ojasiewicz inequality,'' {\em Mathematics of operations
  research}, vol.~35, no.~2, pp.~438--457, 2010.

\bibitem{bolte2014proximal}
J.~Bolte, S.~Sabach, and M.~Teboulle, ``Proximal alternating linearized
  minimization for nonconvex and nonsmooth problems,'' {\em Mathematical
  Programming}, vol.~146, no.~1, pp.~459--494, 2014.

\bibitem{razaviyayn2013unified}
M.~Razaviyayn, M.~Hong, and Z.-Q. Luo, ``A unified convergence analysis of
  block successive minimization methods for nonsmooth optimization,'' {\em SIAM
  Journal on Optimization}, vol.~23, no.~2, pp.~1126--1153, 2013.

\bibitem{tseng2009coordinate}
P.~Tseng and S.~Yun, ``A coordinate gradient descent method for nonsmooth
  separable minimization,'' {\em Mathematical Programming}, vol.~117, no.~1,
  pp.~387--423, 2009.

\bibitem{hu2021linear}
Y.~Hu, C.~Li, K.~Meng, and X.~Yang, ``Linear convergence of inexact descent
  method and inexact proximal gradient algorithms for lower-order
  regularization problems,'' {\em Journal of Global Optimization}, vol.~79,
  no.~4, pp.~853--883, 2021.

\bibitem{mosek}
M.~ApS, {\em The MOSEK optimization toolbox for MATLAB manual. Version 9.0.},
  2019.

\bibitem{FOUCART2009395}
S.~Foucart and M.-J. Lai, ``Sparsest solutions of under-determined linear
  systems via $\ell_{q}$-minimization for $0<q\leq1$,'' {\em Applied and
  Computational Harmonic Analysis}, vol.~26, no.~3, pp.~395--407, 2009.

\bibitem{22}
D.~Ge, X.~Jiang, and Y.~Ye, ``A note on the complexity of $\ell_{p}$
  minimization,'' {\em Mathematical programming}, vol.~129, no.~2,
  pp.~285--299, 2011.

\bibitem{23}
SpamAssassin Public Corpus, \url{http://spamassassin.apache.org}.

\bibitem{24}
Andrew Ng, Machine learning MOOC,
  \url{https://www.coursera.org/learn/machine-learning}.

\bibitem{15}
M.~{Sznaier}, M.~{Ayazoglu}, and T.~{Inanc}, ``Fast structured nuclear norm
  minimization with applications to set membership systems identification,''
  {\em IEEE Transactions on Automatic Control}, vol.~59, no.~10,
  pp.~2837--2842, 2014.

\bibitem{16}
Z.~{Liu} and L.~{Vandenberghe}, ``Semidefinite programming methods for system
  realization and identification,'' in {\em Proceedings of the 48h IEEE
  Conference on Decision and Control (CDC) held jointly with 2009 28th Chinese
  Control Conference}, pp.~4676--4681, 2009.

\bibitem{17}
M.~Fazel, T.~K. Pong, D.~Sun, and P.~Tseng, ``Hankel matrix rank minimization
  with applications to system identification and realization,'' {\em SIAM
  Journal on Matrix Analysis and Applications}, vol.~34, no.~3, pp.~946--977,
  2013.

\bibitem{matrixIRLS1}
C.~K{\"u}mmerle and C.~Mayrink~Verdun, ``Escaping saddle points in
  ill-conditioned matrix completion with a scalable second order method,'' in
  {\em Workshop on Beyond First Order Methods in ML Systems at the $37^{th}$
  International Conference on Machine Learning}, 2020.

\bibitem{matrixIRLS2}
C.~K{\"u}mmerle and C.~Mayrink~Verdun, ``A scalable second order method for
  ill-conditioned matrix completion from few samples,'' in {\em International
  Conference on Machine Learning (ICML)}, 2021.

\bibitem{SIRLS-IRLS}
K.~Mohan and M.~Fazel, ``Iterative reweighted algorithms for matrix rank
  minimization,'' {\em Journal of Machine Learning Research}, vol.~13, no.~110,
  pp.~3441--3473, 2012.

\bibitem{aybat2012first}
N.~S. Aybat and G.~Iyengar, ``A first-order augmented {L}agrangian method for
  compressed sensing,'' {\em SIAM Journal on Optimization}, vol.~22, no.~2,
  pp.~429--459, 2012.

\bibitem{hale2007fixed}
E.~T. Hale, W.~Yin, and Y.~Zhang, ``A fixed-point continuation method for
  l1-regularized minimization with applications to compressed sensing,'' {\em
  CAAM TR07-07, Rice University}, vol.~43, p.~44, 2007.

\bibitem{comp_paper_pg}
C.~O'Brien and M.~D. Plumbley, ``Inexact proximal operators for
  $\ell_{p}$-{Q}uasi-norm minimization,'' in {\em 2018 IEEE International
  Conference on Acoustics, Speech and Signal Processing (ICASSP)},
  pp.~4724--4728, 2018.

\bibitem{beck2009fast}
A.~Beck and M.~Teboulle, ``A fast iterative shrinkage-thresholding algorithm
  for linear inverse problems,'' {\em SIAM journal on imaging sciences},
  vol.~2, no.~1, pp.~183--202, 2009.

\end{thebibliography}
\bibliographystyle{ieeetr}

%

\end{document}